\newtheorem{theorem}{Theorem}[section]
\newtheorem{definition}[theorem]{Definition}
\newtheorem{example}[theorem]{Example}
\newtheorem{proposition}[theorem]{Proposition}
\newtheorem{remark}[theorem]{Remark}
\newtheorem{corollary}[theorem]{Corollary}
\newtheorem{lemma}[theorem]{Lemma}
\newtheorem*{th1*}{Theorem 1}
\newtheorem*{th2*}{Theorem 2}
\newtheorem*{cor1*}{Corollary 1}
\newtheorem*{cor2*}{Corollary 2}
\newtheorem*{prediction*}{Prediction}
\newtheorem*{definition*}{Definition}
\newtheorem*{example*}{Example}
\newtheorem*{lemma*}{Lemma}
\newtheorem*{theorem*}{Theorem}
\newcommand{\beq}{\begin{equation}}
\newcommand{\eeq}{\end{equation}}
\newcommand{\FAT}[1]{\mbox{{$\mathbb{#1}$}}}
\newcommand{\NN}{\FAT{N}}
\newcommand{\RR}{\FAT{R}}
\DeclareMathOperator*{\argmax}{argmax}
\newcommand{\commentout}[1]{}
\def\wl{\par \vspace{\baselineskip}}
\begin{document}

\title{Cooperative Equilibrium beyond Social Dilemmas: \\ Pareto  Solvable Games}

\author{Valerio Capraro\footnote{Centrum Wiskunde \& Informatica (CWI), The Netherlands} \quad Maria Polukarov\footnote{University of Southampton, United Kingdom}  \quad Matteo Venanzi$^{\dag}$ \quad Nicholas R. Jennings$^{\dag}$
}

\maketitle

\begin{abstract}
A recently introduced concept of \emph{cooperative equilibrium}~\cite{capraro2013model}, based on the assumption that players have a natural attitude to cooperation, has been proven a powerful tool in predicting human behaviour in social dilemmas. In this paper, we extend this idea to more general game models, termed \emph{Pareto  solvable} games, which in particular include the Nash Bargaining Problem and the Ultimatum Game. We show that games in this class possess a unique pure cooperative equilibrium. Furthermore, for the Ultimatum Game, this notion appears to be strongly correlated with a suitably defined variant of the Dictator Game. We support this observation with the results of a behavioural experiment conducted using Amazon Mechanical Turk, which demonstrates that our approach allows for making  statistically precise predictions of average behaviour in such settings. 
\end{abstract}

\section{Introduction}\label{sec:introduction}
\noindent
Over recent decades, numerous experimental studies of human behaviour in economic scenarios have demonstrated the implausibility of traditional game-theoretic solutions based on payoff maximisation, rational choice and equilibria~\cite{camerer2003behavioral}. One of the most influential examples in experimental economics, offering evidence against such rationally self-interested play, is the Ultimatum Game~\cite{GSS82}.\footnote{Other examples include the Prisoner's Dilemma, the Public Goods Game, and the Dictator Game (see, e.g.,~\cite{camerer2003behavioral})} In this model, one player proposes a division of a given budget, $b$, between himself and the other player, who can either accept or reject the proposal. If the recipient accepts, the budget is split accordingly; if he rejects, both players get nothing. 

Clearly, if the recipient is a payoff maximiser, he would accept any non-zero offer, so the proposer should offer the smallest amount possible. However, as experimental results show (see, e.g.,~\cite{GSS82,kahneman1986fairness,camerer2003behavioral,larrick1997claiming,wallace2007heritability,zak2007oxytocin,roth1991bargaining,wells2013strategic}), human subjects tend to deviate from this equilibrium behaviour. Specifically, the recipients often reject low offers, and the proposers often propose higher shares. While there is a considerable quantitative variation across studies, the following regularities can be observed:
\begin{itemize}
	\item[I.] There are virtually no offers above $0.5b$, where $b$ denotes the given budget~\cite{fehr1999theory}.
	\item[II.] The vast majority of the offers lie in the interval $[0.3b,0.5b]$~\cite{rand2013evolution}.	
	\item[III.] There are almost no offers below $0.2b$~\cite{fehr1999theory,rand2013evolution}.
	\item[IV.] Offers below $0.25b$ are rejected with high probability~\cite{fehr2003nature}.
\end{itemize}

These results inspired several attempts to develop suitable models to explain human behaviour more accurately. Some were based on strategic arguments along the idea that higher offers are more likely to be accepted~\cite{roth1991bargaining,wells2013strategic}; others used the so-called  ``inequity aversion'' model, postulating that people are averse to unfair situations and so propose a fairer budget split~\cite{fehr1999theory,henrich2001search,lin2002using}. In this line, the classical Fehr-Schmidt inequity aversion model~\cite{fehr1999theory} is relatively successful in explaining the regularities I--IV. However, the parameter estimation method in this model (involving two parameters for each of the players---see Appendix for more details) has come under serious criticism in the recent literature~\cite{binmore2010experimental} due to the difficulty of generalising this method to a wider set of economic games. Furthermore, the recently proposed concept of ``benevolence'' (that is, making decisions towards increasing the opponent's payoff beyond one's own)~\cite{CSMN}, has challenged the very idea of inequity aversion (see also~\cite{charness2002understanding,engelmann2004inequality}). Alternatively, Rand et al.~\cite{rand2013evolution} used stochastic evolutionary game theory to show that natural selection favours fairness. However, while their model showed excellent results in matching population averages, there was a wider variance in individual strategies than is normally present in empirical data. In particular, their model yields a large number of individuals who make and demand offers above 50 per cent of the budget~\cite{fowler2013random}. 

Based on these findings, Camerer~\cite{camerer2003behavioral} suggested that human players in the Ultimatum Game are most likely to be motivated by a combination of strategic incentives and inequity aversion. In this work, we explore this idea and propose a solution concept that compounds these two factors. We note that, even though Fehr and Schmidt~\cite{fehr1999theory} also based their model on a similar approach, our work differs from theirs in a number of conceptual aspects, which in particular include the way we define the utility function. Specifically, in our model the strategic component comes in the form of a tendency to maximise a utility depending on the social welfare, while the inequity aversion component corresponds to a donation of the player in a certain variant of the Dictator Game~\cite{kahneman1986fairness} that we associate with the given instance of the Ultimatum Game. Both these behavioural tendencies have been reported by previous experimental studies~\cite{isaac1988group,goeree2001ten,nowak2006five,rand2013human,CJR,CSMN,fehr2003nature,burnham2003engineering,branas2006poverty,branas2007promoting,charness2008s,engel2011dictator,franzen2013external}. 

In more detail, our model builds on the concept of cooperative equilibrium, previously proposed for two-player one-shot social dilemmas~\cite{capraro2013model}, iterated two-player social dilemmas~\cite{CVPJ}, and $n$-player one-shot highly symmetric social dilemmas~\cite{barcelo2015group,capraro2015group}. We extend the definition to more general game classes, which we term here Pareto solvable games, due to their properties. These games in particular include the aforementioned Ultimatum Game, as well as other fundamental models, such as the Nash Bargaining Problem and the Coordination Game. We show that these games possess a unique cooperative equilibrium. 

Importantly, cooperative equilibrium in the Ultimatum Game gives a solution that explicitly predicts the regularities III and IV observed in previous experiments. Moreover, this solution determines a precise correlation between a player's behaviour in the Ultimatum Game and its corresponding variant of the Dictator Game. Using this correlation, we can give an a posteriori explanation for regularities I and II. 

Essentially, the idea behind the concept of cooperative equilibrium is to reduce the original game to another game, called the \emph{induced game}, which differs from the original game only in the set of allowed strategy profiles. It turns out that in the game induced by the Ultimatum Game, the proposer can not make proposals below some $x\%>0$ of the initial budget and the recipient can no longer reject. That is, the induced game is exactly the variant of the Dictator Game where the recipient is initially endowed with $x\%$ of the budget, and the proposer has to decide whether to make an additional donation to the recipient from the remaining part of the budget. The parameter $x$ represents the level of players' cooperation and their perception of money. Following similar techniques to those used  in~\cite{capraro2013model} to compute this parameter, we obtain $x=25\%$. In other words, cooperative equilibrium predicts that, on average, the proposer's behaviour in the Ultimatum Game is equivalent to the dictator's behaviour in the variant of the Dictator Game where the dictator is given $75\%$ of the original budget and the recipient has $25\%$ of the budget. That is, the offer in the Ultimatum Game should be equal to the donation in this variant of the Dictator Game, augmented by $25\%$ of the budget initially given to the recipient.

To test this prediction, we conducted a behavioural experiment via Amazon Mechanical Turk~\cite{paolacci2010running,horton2011online,rand2012promise}. We recruited US subjects, who were randomly assigned to two different treatments. In the first treatment, each participant was given a role of either the proposer or the recipient in the Ultimatum Game with a stake of  $\$0.40$. The proposers were then asked to play the following variant of the Dictator Game: they were given additional $\$0.30$ (i.e., 75\% of the budget in the Ultimatum Game) and paired with another anonymous MTurk worker who was given only $\$0.10$ (the remaining 25\%). In the second treatment, the games were played in the reversed order.
%\commentmv{The description of the experiment sounds way to detailed at this point. Maybe you don't need to describe the stake and the percentages?}

As our results demonstrate, cooperative equilibrium correctly predicts the majority of human behaviours in such settings. Specifically, we observed that about $80\%$ of our experimental subjects followed the cooperative equilibrium prediction. Interestingly, the remaining $20\%$ of the subjects who deviated from the prediction were those who behaved completely inconsistently in the two games: that is, they chose to split the budget equally in the Ultimatum Game, but kept everything to themselves in the Dictator Game. We note that this type of subject appears problematic for the inequity aversion models as well, and we believe, deserves a separate consideration (see Section~\ref{sec:discussion} for a thorough discussion).

Finally, this paper adds to the experimental evidence for a positive correlation between proposals in the Ultimatum Game and donations in the Dictator Game. To date, we are only aware of two published experimental studies where participants played both of these games, and one of them~\cite{peysakhovich2014humans} reported a significant positive correlation between the strategies in the Ultimatum Game and the Dictator Game.\footnote{In particular, \cite{exadaktylos2013experimental} did not directly report any correlation between the strategies in the two games. However, upon our request, one of the authors (Antonio Esp\'in) have confirmed that they also observe such a correlation} In this context, our work contributes to formally explain such a correlation between the two games through our newly defined concept of cooperative equilibrium for Pareto solvable games.
%\footnote{We thank Antonio Esp\'in for this information}. 

The paper unfolds as follows. We start with describing the games in consideration in Section~\ref{sec:preliminaries}. Then, in Section~\ref{sec:cooperative equilibrium}, we define our model. In detail, we first formalise our notion of fairness and then determine which strategies the players would consider feasible in this respect. We then proceed and introduce the class of Pareto  solvable games and formally define a cooperative equilibrium for these games. Our theoretical results are presented in Section~\ref{sec:results}, where we compute cooperative equilibria for the Ultimatum Game and the Nash Bargaining Problem. Specifically, we show that these games are indeed Pareto solvable and have a unique cooperative equilibrium, in which both players play pure strategies. Further, Section~\ref{sec:experiment} presents the experiment we conducted to test our predictions about the behaviour in the Ultimatum Game and its correlation with the behaviour in the Dictator Game. We conclude with a discussion of our results and future directions in Section~\ref{sec:discussion}. Additionally, the Appendix provides full instructions for our experimental subjects.

%%%%%%%%%%%%%%%%%%%%%%%%%%%%%%%%%%%%%%%%%%%%%%%%%%
%%%%%%%%%%%%%%%%%%%%%%%%%%%%%%%%%%%%%%%%%%%%%%%%%%

\section{Preliminaries}\label{sec:preliminaries}
\noindent
We start with the definitions of games considered in this paper. Our interest in these game classes is motivated by the two following reasons. First, they model situations of primary importance, such as negotiation and coordination problems. Second, they provide empirical evidence of behaviours, which are inconsistent with traditional game-theoretic solutions. In particular, the Ultimatum Game that we use as our main example, has long been in the centre of attention in experimental economics. 
\wl
\textbf{Ultimatum Game (UG).} In this game, the first player (the \emph{proposer}) is endowed with a sum of money (or, \emph{budget}), $b$, and needs to propose how to divide this money between himself and the other player (the \emph{recipient}). That is, he chooses the amount $p(b) \in [0, b]$ to propose to the recipient. The recipient gets to decide whether to accept or reject the proposal. If he accepts, the money is split according to the proposal: that is, the recipient gets $p(b)$, and the proposer keeps the remaining $b - p(b)$. If the recipient rejects, both players end up with no money at all. We denote this game by UG$(b)$.

\textbf{Dictator Game (DG).} In this game, similarly to the Ultimatum Game, the first player (the \emph{dictator}) determines how to split a budget $b$ between himself and the second player. The second player (the \emph{recipient}) simply accepts the donation made by the dictator. Thus, the recipient's role is entirely passive and has no effect on the outcome of the game. We denote this game by DG$(b)$.

For our study, it is also convenient to define a modified version of the Dictator Game, where the recipient may have some initial budget, $b_R$, and the dictator has to decide, how much (if at all) to donate to the recipient from his own budget, $b_D$, in addition to the recipient's initial budget. Thus, the dictator chooses a strategy $d(b_D, b_R) \in [0, b_D]$, and the recipient has no say. We denote this game by DG$(b_D,b_R)$.

\textbf{Nash Bargaining Problem (NBP).} This is a two-player game used to model bargaining interactions. In this game, two players demand a portion of a given budget, $b$. If the total amount requested by the players is less than $b$, they both get their request. If the total request exceeds $b$, both get nothing. 

\textbf{Coordination Game (CG).} In this game, each player has to select one of the two strategies, $A$ or $B$. Both players prefer any coordinated profile $(A,A)$ or $(B, B)$ to any uncoordinated profile, $(A,B)$ or $(B, A)$. However, the players' individual utilities from playing $(A,A)$ or $(B, B)$ may be different. 

Note also, that $(A,A)$ and $(B, B)$ are pure strategy Nash equilibria. Now, any probability distribution over these two equilibria is a correlated equilibrium, and the players have to decide which of the infinitely many possible such equilibria should be played. If they disagree and choose different distributions, they are likely to receive zero payoffs, while all bargaining solutions define correlated equilibria.
\wl
%\noindent
%In the following sections, we introduce a class of what we term Pareto solvable games and demonstrate that all the game models above fall in this class. For these games, we then define the concept of cooperative equilibrium and show that they possess a unique such equilibrium in pure strategies. 

%%%%%%%%%%%%%%%%%%%%%%%%%%%%%%%%%%%%%%%%%%%%%%%%%%
%%%%%%%%%%%%%%%%%%%%%%%%%%%%%%%%%%%%%%%%%%%%%%%%%%

\section{Cooperative equilibrium for Pareto  solvable games}\label{sec:cooperative equilibrium}
\noindent
%\commentmv{Repetition}
%The concept of cooperative equilibrium was first presented for single-shot two-player symmetric social dilemmas~\cite{capraro2013model}, then defined for some iterated social dilemmas~\cite{CVPJ}, and more recently, for $n$-player highly symmetric social dilemmas~\cite{barcelo2015group}.
%
In this section, we extend the definition of cooperative equilibrium to a larger class of single-shot $n$-player games, which we term Pareto solvable. This class, in particular, contains the Ultimatum Game and the Nash Bargaining Problem (and hence, the Coordination Game) as above.

Our key idea is that players would divide into coalitions and act cooperatively as long as they believe that the resulting outcome of the game is ``fair'' to the members of each coalition. In Section \ref{subsec:undominated fair strategies}, we formalise our notion of fairness for a given coalition of players, and then determine which strategies the players would consider feasible  in this respect, given a particular coalition structure, that is, a partition of players into coalitions. We then proceed and introduce the class of Pareto  solvable games, and formally define a cooperative equilibrium for these games in Section \ref{subsec:cooperative equilibrium}.  

%%%%%%%%%%%%%%%%%%%%%%%%%%%%%%%%%%%%%%%%%%%%%%%%%%

\subsection{Undominated fair strategies and games generated by coalition structures}\label{subsec:undominated fair strategies}
\noindent
Let $\mathcal G=\left (N,(S_i,\pi_i)_{i\in N}\right)$ be a normal-form game with a set $N$ of $n$ players, and, for all $i\in N$, a finite set of strategies $S_i$ and a monetary payoff function $\pi_i: S \rightarrow\RR$, where $S = \times_{j\in N}S_j$. For any given coalition (i.e., a subset) of players $C \subseteq N$, we write $S_C = \times_{j\in C}S_j$ for the set of strategy profiles restricted to $C$, and we use $-C$ to denote its complement, $N\setminus C$. Similarly, for a single player $i$, we use  $-i$ for the set $N\setminus\{i\}$ of all players but $i$. 

We denote by $\Delta(X)$ the set of probability distributions on a finite set $X$. Thus, $\Delta(S_i)$ defines the set of mixed strategies for player $i\in N$, and his expected payoff from a mixed strategy profile $\sigma$ is given by $\pi_i(\sigma) = \sum_{s\in S} \pi_i(s)\sigma_1(s_1)\cdot\ldots\cdot\sigma_{n}(s_{n})$, where $s=(s_1,\ldots,s_n)$.

Note that while the function $\pi_i$ defines a monetary payoff of player $i$ in game $\mathcal G$, different players may have different perceptions of the same amount of money they gain from the game. Formally, we express this idea using an \emph{aggravation} function Agg$_i(x,y)$, defined for $x\geq y$, which represents the aggravation that player $i$ experiences when renouncing a monetary payoff of $x$ and receiving a monetary payoff of $y$. We assume the following:
\begin{itemize}
\item Agg$_i(x,y)$ is nonnegative: 
\begin{itemize}
\item Agg$_i(x,y)\geq0$, for all $x\geq y$, and 
\item Agg$_i(x,y)=0$ if and only if $x=y$;
\end{itemize}
\item Agg$_i(x,y)$ is continuous;
\item Agg$_i(x,y)$ is strictly increasing in $x$ and strictly decreasing in $y$.
\end{itemize}
\begin{example}\label{ex:every stake}
{\rm An intuitive example of an aggravation function is given by {\rm Agg}$_i(x,y)=u_i(x)-u_i(y)$, where $u_i(k)$ stands for the utility of player $i$ from getting a monetary payoff of $k$. It is often assumed (see~\cite{kahneman1979prospect}, p. 278) that $u_i$ is concave for positive payoffs (gains) and convex for negative payoffs (losses). That is, the same increase in the gains is perceived more significant for small amounts and less significant as the gains grow, and the other way around for the losses. In low-stake experiments, however, this effect is likely to be negligible, so one may find the simple aggravation function {\rm Agg}$_i(x,y)=x-y$ practically useful, since it is parameter-free and easy to compute.}
\end{example}
Now, let $p=(p_1,\ldots,p_k)$ denote a partition of the player set $N$, defining a possible coalition structure that the players can form in $\mathcal G$. We denote the set of all possible partitions of $N$ by $\mathcal{P}(N)$. Each element $p_\alpha$ of this partition is interpreted as a coalition: that is, the players in the same $p_\alpha$ play together. It is natural to expect that the players in a coalition $p_\alpha$ would only accept those outcomes, which pay the members of the coalition in a fair way (at the very least). With the aggravation functions as above, we define a notion of fairness as follows. 

Let $p_{\alpha}$ be a coalition in the coalition structure $p$ and $s$ be a profile of pure strategies. Let $W_{p_\alpha}(s)=\sum_{i\in p_\alpha}\pi_i(s)$ denote the social welfare of coalition  $p_{\alpha}$, and set:
\begin{align}
M_{p_\alpha}(s)=\max_{ i\in p_\alpha}{\rm Agg}_i(W_{p_\alpha}(s),\pi_i(s)),
\end{align}
\begin{align}
m_{p_\alpha}(s)=\min_{ i\in p_\alpha}{\rm Agg}_i(W_{p_\alpha}(s),\pi_i(s)).
\end{align}
That is, $M_{p_\alpha}(s)$ (resp., $m_{p_\alpha}(s)$) denotes the maximum (resp., the minimum) aggravation that a player in coalition $p_{\alpha}$ has from getting his share of payoff, $\pi_i(s)$, compared to the total coalitional payoff, $W_{p_\alpha}(s)$. Intuitively, the smaller the difference between $M_{p_\alpha}(s)$ and $m_{p_\alpha}(s)$, the more fair the distribution of payoffs among the coalition members. Hence, we make the following definition. 
\begin{definition}\label{def:fair strategies}
{\rm A profile of pure strategies $s$ is said to be \emph{fair} for a given coalition $p_\alpha \in p$, if it minimises the difference 
\begin{align}
M_{p_\alpha}(s)-m_{p_\alpha}(s).
\end{align}}
\end{definition}
Obviously, there may be multiple fair profiles of strategies, resulting in different individual payoffs to the members of a coalition (cf. Example~\ref{ex:generated game}). We order the set of fair strategies using the super-dominance relation, introduced independently in~\cite{capraro2013solution} and~\cite{HaPa13}. 
\begin{definition}\label{def:super-dominance}
{\rm A profile of pure strategies $s_{p_\alpha}'\in S_{p_\alpha}$ for coalition $p_\alpha$ \emph{super-dominates} a profile $s_{p_\alpha}\in S_{p_\alpha}$, if for all $s_{-p_\alpha}, s_{-p_\alpha}'\in S_{-p_\alpha}$ one has $\pi_i(s'_{p_\alpha},s_{-p_\alpha}')> \pi_i(s_{p_\alpha},s_{-p_\alpha})$, $\forall i\in p_\alpha$.}
\end{definition}
In words, a strategy is super-dominated by another strategy if the largest payoff that can be obtained by playing the former strategy is strictly smaller than the smallest payoff that can be obtained by playing the latter.

Now, let $\text{Fair}(\mathcal G,p_{\alpha})$ be the set of pure strategies that are fair for coalition $p_\alpha$ and let $\text{Max}(\text{Fair}(\mathcal G,p_{\alpha}))$ contain its maximal elements w.r.t. the partial order induced by the super-dominance relation\footnote{Since strategy sets are finite, $\text{Fair}(\mathcal G,p_{\alpha})$ always has a maximal element. Moreover, for the Ultimatum Game and the Bargaining problem, we will show that $\text{Fair}(\mathcal G,p_{\alpha})$ is always a singleton, thus implying $\text{Max}(\text{Fair}(\mathcal G,p_{\alpha})) = \text{Fair}(\mathcal G,p_{\alpha})$. For these games the assumption of the strategy sets' finiteness can then be removed and, in fact, for technical convenience, we will assume these sets are continuous.}. These will be named \emph{undominated fair strategies} for coalition $p_\alpha$.

Given this, we next say that a coalition structure $p$ generates a new game, $\mathcal G_p$, where players within each coalition $p_\alpha \in p$ cooperate to maximise their total payoff, playing fair, undominated strategies. Formally, 
\begin{definition}\label{def:generated game}
{\rm Given a game $\mathcal G$ and a coalition structure $p$ over $N$, let $\mathcal G_p$ be the game \emph{generated by} $p$ as follows. The player set in $\mathcal G_p$ corresponds to the set of coalitions $p_\alpha \in p$, for each of which its set of mixed strategies is given by the closure of the convex hull of $\text{Max}(\text{Fair}(\mathcal G,p_\alpha))$ and the payoffs are defined as the sums of payoffs of the members of $p_\alpha$.}
\end{definition}
Let $p_s$ denote what we call the (fully) \emph{selfish} coalition structure where each player acts individually---i.e., $p_s=(\{1\},\ldots,\{n\})$. Since every strategy is fair in such a structure, the game $\mathcal G_p$ is then obtained from $\mathcal G$ by simply deleting all super-dominated strategies. In general, however, $\mathcal G_p$ may appear very different from $\mathcal G$ and, importantly, it need \emph{not} be the game generated by the strategies that maximise the total welfare. In particular, this may happen for the (fully) \emph{cooperative} coalition structure $p_c =(N)$ where all the players act collectively, forming the \emph{grand} coalition, $N=\{1,\ldots,n\}$. See the following Example~\ref{ex:generated game} for illustration.  
\begin{example}\label{ex:generated game}
{\rm Let $\mathcal G$ be the following variant of the Prisoner's dilemma:
$$
  \begin{array}{ccc}
     & \text{C} & \text{D} \\
    \text{C} & 10,10 & 1,20 \\
    \text{D} & 20,1 & 2,2 \\
  \end{array}
$$
Assume that players have the same perception of money, which we denote {\rm Agg}. Since {\rm Agg} is strictly decreasing in the variable $y$, one can easily verify that for the  cooperative coalition structure $p_c=(\{1,2\})$, the only fair strategies are $(C,C)$ and $(D,D)$. Indeed,
\begin{align*}
M_{\{1,2\}}(C,D)-m_{\{1,2\}}(C,D) = M_{\{1,2\}}(D,C)-m_{\{1,2\}}(D,C) = \text{ \rm Agg}(21,1)-\text{\rm Agg}(21,20) > 0, 
\end{align*}
while 
\begin{align*}
& M_{\{1,2\}}(C,C)-m_{\{1,2\}}(C,C)  \, \, = \text{ \rm Agg}(20,10)-\text{\rm Agg}(20,10)   \\
 = \quad  & M_{\{1,2\}}(D,D)-m_{\{1,2\}}(D,D)  = \text{ \rm Agg}(4,2)-\text{\rm Agg}(4,2)  = 0.
\end{align*}
Now, note that $(D,D)$ is super-dominated by $(C,C)$, which is then the only maximal fair strategy, and hence, $\mathcal G_{p_c}$ is a one-player, one-strategy game. Yet, the only strategy (i.e., $(C,C)$) admissible in this game does \emph{not} maximise the social welfare. 
}
\end{example}

%%%%%%%%%%%%%%%%%%%%%%%%%%%%%%%%%%%%%%%%%%%%%%%%%%

\subsection{Pareto solvable games and cooperative equilibrium}\label{subsec:cooperative equilibrium}
\noindent
Having defined the games generated by coalition structures, we now use them to formalise our extended notion of cooperative equilibrium. For simplicity of exposition, we restrict the definition to the case of two-player games, which in particular contain the Ultimatum Game and the Bargaining problem, which are the focus of this paper. We later discuss how this definition generalises to games with $n$ players. 

Consider a two-player game $\mathcal G$ and fix a coalition structure $p$. For the game $\mathcal G_p$ generated by $p$, let $\text{Nash}(\mathcal G_p)$ denote the set of its mixed strategy Nash equilibria. We then say that player $i$  \emph{plays according} to the coalition structure $p$, if he plays his corresponding marginal strategy $\sigma_i$ of some Nash equilibrium profile $\sigma \in \text{Nash}(\mathcal G_p)$, given that other players play $\sigma_{-i}$. %\commentmp{What if there are multiple equilibria?} %Note that in case of multiple equilibria, players $i$ and $j$ are said to play according to $p$ even if they play strategies $\sigma_i$ and $\sigma'_j$ of different profiles $\sigma, \sigma' \in \text{Nash}(\mathcal G_p)$. 
We say that player $i$  \emph{deviates} from $p$ if he applies a unilateral profitable deviation from an equilibrium profile $\sigma \in \text{Nash}(\mathcal G_p)$---that is, he plays a strategy $\sigma_i'$ against $\sigma_{-i}$, for some $\sigma \in \text{Nash}(\mathcal G_p)$, such that $\pi_{i}(\sigma_i',\sigma_{-i})>\pi_i(\sigma)$. Note that while no player can deviate from the selfish coalition structure $p_s = (\{1\}, \{2\})$ by definition, deviations may be applied from the cooperative structure $p_c =(\{1, 2\})$. 

Given a coalition structure $p$, a player $i$ and a subset $D\subseteq N \setminus \{i\}$ of other players, let $\tau^i_{D}(p)\in[0,1]$ denote the subjective probability that player $i$ assigns to the event that the set of players $D$ deviates from $p$ for the sake of their individual interests, knowing that player $i$ plays according to the coalition structure $p$. In a two-player game, we only have $D=-i$ or $D=\emptyset$, and so $\tau^i_{-i}(p) = 1 - \tau^i_{\emptyset}(p)$. In~\cite{capraro2013model}, the probabilities $\tau^i_{D}(p)$ were computed directly using equation (\ref{eq:parameters}) below. However, in light of the more recent developments, where these probabilities have been used to show the consistency of the cooperative equilibrium theory with the Social Heuristics Hypothesis~\cite{R13a} or predicting human behaviour in iterated social dilemmas~\cite{CVPJ}, here we prefer to treat them as parameters, so that the following properties hold: 
%\begin{itemize}
%\item[(A1)] If no player in $D$ has a profitable deviation from $\sigma \in \text{Nash}(\mathcal G_p)$, then $\tau^i_{D}(p)=0$; %in particular, we have $\tau^i_{D}(p_s)=0$.
%\item[(A2)] If for all $\sigma\in\text{Nash}(\mathcal G_p)$ and for all deviations $\sigma_{D}'$ of a set $D$ from $\sigma$ there is a strategy $\sigma_{i}'$ such that $\pi_{j}(\sigma_i',\sigma_{D}', \sigma_{-D\cup\{i\}})<\pi_{j}(\sigma)$ for some $j \in D$, then $\tau^i_{D}(p) < 1$.
%\end{itemize}
\begin{itemize}
\item[(A1)] If player $-i$ has no profitable deviation from any $\sigma \in \text{Nash}(\mathcal G_p)$, then $\tau^i_{-i}(p)=0$; 
\item[(A2)] If for all $\sigma\in\text{Nash}(\mathcal G_p)$ and for all deviations $\sigma_{-i}'$ of player $-i$ from $\sigma$ there is a strategy $\sigma_{i}'$ such that $\pi_{-i}(\sigma_i',\sigma_{-i}')<\pi_{-i}(\sigma)$, then $\tau^i_{D}(p) < 1$.
\end{itemize}
That is, if there is no incentive for a player to deviate from any equilibrium profile under a given coalition structure, then the probability of deviating from the structure is zero; and, if deviating from a coalition structure may lead to an eventual loss, then the probability of deviating is less than 1.  In particular, (A1) implies that $\tau^i_{-i}(p_s)=0$, since the sets of Nash equilibria of $\mathcal G$ and $\mathcal G_{p_s}$ coincide, and hence no unilateral deviations are profitable.

Now, let $e^i_{\emptyset}(p)$ be the minimal payoff that player $i$ can get if nobody abandons the coalition structure $p$ (i.e., $e^i_{\emptyset}(p) = \min_{\sigma \in \text{Nash}(\mathcal G_p)} \pi_i(\sigma)$), and let $e^i_{-i}(p)$ be the minimal payoff of player $i$ when he plays according to some Nash equilibrium of $\mathcal G_p$ and $-i$ deviates from that equilibrium profile. We define the \emph{value} of a coalition structure $p$ for player $i$ as an expected value of these minimal payoffs, that is,  
\begin{align}
v_i(p) = e^i_{\emptyset}(p)\tau^i_{\emptyset}(p)+e^i_{-i}(p)\tau^i_{-i}(p).
\end{align} 
In what follows, we consider a game $\mathcal G$ as a tuple $\left (N,(S_i,\pi_i)_{i\in N},\tau^i_{D}(p)_{i\in N, D\subseteq N\setminus\{i\},p \in \mathcal{P}(N)}\right)$, where $\left (N,(S_i,\pi_i)_{i\in N}\right)$ corresponds to the original strategic game form and $\tau^i_{D}(p) \in [0,1]$ are given parameters satisfying the properties (A1) and (A2).

With this notation, we now define the class of \emph{Pareto  solvable} games, as follows.
\begin{definition}
{\rm A two-player game $\mathcal G$ is called \emph{Pareto  solvable} if there exists a coalition structure $p^*$ that maximises the values $v_1(p)$ and $v_2(p)$ for both players at once.}
\end{definition}
The class of Pareto  solvable games is quite large. Indeed, under the assumption of the deviation probability parameters $\tau^i_{D}(p)$ being symmetric (that is, $\tau^1_{2}(p)=\tau^2_{1}(p)$ for $p=p_s, p_c$), which is implied, in particular, by the formulas for their computation proposed in~\cite{capraro2013model} and~\cite{CVPJ}, it clearly contains all major two-player social dilemmas, such as the Prisoner's dilemma, the Traveler's dilemma and the Bertrand competition. Moreover, regardless of the exact values of $\tau^i_{D}(p)$'s, as long as they satisfy the properties (A1) and (A2), it also includes the Ultimatum Game and the Nash Bargaining Problem (and hence, the Coordination game), as we demostrate in Section~\ref{sec:results} below.

For these games, we now define the notion of cooperative equilibrium. We use the values $v_i(p^*)$ to represent the players' beliefs about the payoffs they expect to obtain in the game. In other words, these values make a tacit binding among the players, such that they only play those strategies that garantee them payoffs of at least $v_i(p^*)$. 

\begin{definition}
{\rm Let $\mathcal G$ be a Pareto  solvable game and $p^*$ a coalition structure that maximises the values $v_i(p)$ for both players. The \emph{induced} game ${\rm Ind}(\mathcal G, p^*)$ is the restriction of $\mathcal G$ where the set of (mixed) strategies is reduced to include only those profiles $\sigma \in \times_{i \in N} \Delta(S_i)$, for which $\pi_i(\sigma)\geq v_i(p^*)$ for all $i \in N$.}
\end{definition}

Finally, since the set of strategy profiles in the induced game is non-empty, convex and compact, it possesses a Nash equilibrium, and we can make the following definition.
\begin{definition}\label{defin:cooperative equilibrium}
{\rm A \emph{cooperative equilibrium} of a Pareto  solvable game $\mathcal G$ is defined as a Nash equilibrium of the induced game ${\rm Ind}(\mathcal G, p^*)$.}
\end{definition}

\begin{remark}
{\rm For the sake of simplicity, we have defined the cooperative equilibrium only for two-player Pareto solvable games. The extension to $n$-player games is straightforward. For example, the value for player $i$ of a coalition structure $p$ will be defined as 
$$
v_i(p) = \sum_{C\subseteq N\setminus\{i\}}e_C^i(p)\tau_C^i(p),
$$
where $\tau_C^i(p)$ is the subjective probability that player $i$ assigns to the event that players in $C$ abandon the coalition structure $p$ and $e_C^i(p)$ is the minimum payoff that player $i$ would get if he plays according to the coalition structure $p$ and players in $C$ abandon it. Given this definition, one may define $n$-player Pareto solvable games as those games for which there is a coalition structure $p^*$ such that $v_i(p)$ is maximised, for all $i=1,\ldots,n$. In a similar fashion, all other definitions extend to $n$ player games. }
\end{remark}
%%%%%%%%%%%%%%%%%%%%%%%%%%%%%%%%%%%%%%%%%%%%%%%%%%
%%%%%%%%%%%%%%%%%%%%%%%%%%%%%%%%%%%%%%%%%%%%%%%%%%

\section{Cooperative equilibria of Ultimatum Games and Nash Bargaining Problems}\label{sec:results}
\noindent
In this section, we compute cooperative equilibria for some important subclasses of Pareto  solvable games: namely, for the Ultimatum Game and the Nash Bargaining Problem. Specifically, we demonstrate that these games are indeed Pareto solvable and have a unique cooperative equilibrium in pure strategies. 

%%%%%%%%%%%%%%%%%%%%%%%%%%%%%%%%%%%%%%%%%%%%%%%%%%

\subsection{Purifiable games}\label{subsec:purifiable games}
\noindent
The existence of a pure strategy cooperative equilibrium is due to the fact that the induced games of UG and NBP are what we term \emph{purifiable}, in the sense that for every mixed strategy $\sigma_i$ of any player $i$, there exists a pure strategy $s_i$, which is a profitable deviation from $\sigma_i$, whichever strategy $\sigma_{-i}$ is played by the other player(s). Formally,   
\begin{definition}\label{defin:purifiable games}
{\rm A game $\mathcal G$ is called \emph{purifiable} if for each player $i \in N$ and any mixed strategy $\sigma_i\in\Delta (S_i)$, there is a pure strategy $s_i \in S_i$ such that $\pi_i(s_i,\sigma_{-i})\geq \pi_i(\sigma_i,\sigma_{-i})$, for all $\sigma_{-i}\in \times_{j \neq i}\Delta(S_j)$.}
\end{definition}
For such games, it then suffices to consider only pure strategies. Now, in both the UG and the NBP models, the two players share a unit of payoff (i.e., we normalise the budget $b$ to be equal to 1) between them -- that is, the total sum of the players' payoffs is always 1. As the following Lemma~\ref{lem:unique solution} shows, there is then only one way to share the payoffs so that the players have equal aggravations.
\begin{lemma}\label{lem:unique solution}
Let {\rm Agg}$_i(x,y)$ be the aggravation function of player $i = 1,2$ in game $\mathcal G$. Then, the equation ${\rm Agg}_1(1,1-x) = {\rm Agg}_2(1,x)$, with $x\in[0,1]$, has a unique solution $x^*$. Moreover, $x^* \neq 0,1$.
\end{lemma}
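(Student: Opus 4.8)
The plan is to reduce the claim to a single-variable intermediate-value argument. I would define the auxiliary function $f \colon [0,1] \to \RR$ by $f(x) = {\rm Agg}_1(1, 1-x) - {\rm Agg}_2(1, x)$. A solution of the equation ${\rm Agg}_1(1,1-x) = {\rm Agg}_2(1,x)$ is exactly a zero of $f$, so it suffices to show that $f$ has a unique zero and that this zero lies in the open interval $(0,1)$.

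First I would check that $f$ is well defined on all of $[0,1]$: the aggravation functions are only specified on the region $x \geq y$, and here ${\rm Agg}_1(1,1-x)$ requires $1 \geq 1-x$ (i.e.\ $x \geq 0$) while ${\rm Agg}_2(1,x)$ requires $1 \geq x$, both of which hold throughout $[0,1]$. Continuity of $f$ then follows from the assumed continuity of each ${\rm Agg}_i$ composed with the continuous maps $x \mapsto 1-x$ and $x \mapsto x$. Next I would establish that $f$ is strictly increasing: as $x$ grows, the argument $1-x$ decreases, so by the strict monotonicity of ${\rm Agg}_1$ in its second variable the term ${\rm Agg}_1(1,1-x)$ strictly increases; simultaneously ${\rm Agg}_2(1,x)$ strictly decreases in $x$ for the same reason, and hence their difference is strictly increasing. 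This strict monotonicity is precisely what will deliver uniqueness once a zero is located.

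It then remains to evaluate the endpoints. At $x=0$ we have ${\rm Agg}_1(1,1)=0$ by the characterisation of the zero set of an aggravation function, whereas ${\rm Agg}_2(1,0)>0$ since $1 \neq 0$; hence $f(0) = -{\rm Agg}_2(1,0) < 0$. Symmetrically, at $x=1$ we get ${\rm Agg}_1(1,0)>0$ and ${\rm Agg}_2(1,1)=0$, so $f(1)={\rm Agg}_1(1,0) > 0$. By the intermediate value theorem $f$ has a zero $x^* \in (0,1)$, and by strict monotonicity this zero is unique; since $f(0)<0$ and $f(1)>0$ the zero is interior, which gives $x^* \neq 0,1$. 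The argument is essentially routine once the auxiliary function is introduced, and I expect no genuine obstacle; the only point demanding care is the bookkeeping for strict monotonicity, where one must track that the substitution $y = 1-x$ reverses the direction of monotonicity of ${\rm Agg}_1$ in its second argument, so that the two terms reinforce rather than cancel.
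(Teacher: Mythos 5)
Your proof is correct and is essentially the paper's argument: the authors set $f_1(x)={\rm Agg}_1(1,1-x)$ and $f_2(x)={\rm Agg}_2(1,x)$, note that $f_1$ is strictly increasing and continuous with $f_1(0)=0$, $f_1(1)>0$ while $f_2$ is strictly decreasing and continuous with $f_2(0)>0$, $f_2(1)=0$, and conclude a unique interior intersection. Passing to the difference $f=f_1-f_2$ and invoking the intermediate value theorem plus strict monotonicity, as you do, is just a repackaging of the same idea, with your well-definedness check on the domain $x\geq y$ being a careful extra touch the paper leaves implicit.
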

\begin{proof}
The proof is straightforward. Let $f_1(x)= {\rm Agg}_1(1,1-x)$ and $f_2(x)= {\rm Agg}_2(1,x)$. Then,
\begin{itemize}
\item $f_1(x)$ is a strictly increasing continuous function with $f_1(0)=0$ and $f_1(1)>0$; and 
\item $f_2(x)$ is a strictly decreasing continuous function with $f_2(0)>0$ and $f_2(1)=0$. 
\end{itemize}
Hence, there is a unique  point $x^*$ where $f_1(x^*)=f_2(x^*)$, and $x^*\neq 0,1$. 
\end{proof}
We now use Lemma~\ref{lem:unique solution} to describe cooperative equilibria of the Ultimatum Game and the Nash Bargaining Problem, as follows. 

%%%%%%%%%%%%%%%%%%%%%%%%%%%%%%%%%%%%%%%%%%%%%%%%%%

\subsection{Ultimatum Games}\label{subsec:UG}
\noindent

\begin{theorem}\label{thm:UG}
Let $\mathcal G$ be the Ultimatum Game with players P (the proposer) and R (the recipient), and budget $b$. Let $x^*$ be the solution to ${\rm Agg}_P(1,1-x) = {\rm Agg}_R(1,x)$. Then,
\begin{itemize}
\item $\mathcal G$ is Pareto  solvable, where the values for both players are maximised by the cooperative coalition structure $p_c$ and equal:
$$
v_P(p_c) = b\left(1 - x^*\right)\qquad\text{and}\qquad v_R(p_c) = bx^* \tau^R_{\emptyset}(p_c). 
$$
\item The induced game ${\rm Ind}(\mathcal G,p_c)$ is purifiable, and its purification coincides with the variant of the Dictator Game with parameters $b_D = b\left(1-x^* \tau^R_{\emptyset}(p_c)\right)$ and $b_R =  bx^* \tau^R_{\emptyset}(p_c)$, where the proposer plays the role of a dictator and donates to the recipient the amount of at most $bx^*\tau^R_{P}(p_c)$.
\end{itemize} 
\end{theorem}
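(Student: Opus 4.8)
The plan is to normalize $b=1$ (rescaling linearly at the end) and, since a two-player game admits only the selfish structure $p_s=(\{P\},\{R\})$ and the cooperative structure $p_c=(\{P,R\})$, to compute $v_i(p_s)$ and $v_i(p_c)$ for both players and show that $p_c$ maximizes both. I would start with the grand coalition $p_c$. Writing $r$ for the amount the recipient receives, the social welfare $W_{p_c}$ equals $1$ on every accepted profile and $0$ on every rejection, so the fairness criterion of Definition~\ref{def:fair strategies} reduces to matching $\mathrm{Agg}_P(1,1-r)$ with $\mathrm{Agg}_R(1,r)$: by Lemma~\ref{lem:unique solution} these coincide (difference $0$) precisely at $r=x^*$, whereas every rejection also gives difference $0$ through $\mathrm{Agg}_i(0,0)=0$. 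Passing to super-dominance (Definition~\ref{def:super-dominance}), which for the grand coalition is just coordinate-wise strict inequality of payoffs, the accepted profile at $r=x^*$ pays $(1-x^*,x^*)$ with both coordinates positive (as $x^*\in(0,1)$) and hence super-dominates every rejection profile, which pays $(0,0)$. Thus $\mathrm{Max}(\mathrm{Fair}(\mathcal G,p_c))$ is the single profile ``offer $x^*$, accept,'' and $\mathcal G_{p_c}$ is a one-strategy coalitional game whose unique equilibrium pays $1-x^*$ to $P$ and $x^*$ to $R$.

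Next I would evaluate the deviation terms. The recipient's only alternative to the equilibrium is to reject, dropping his payoff from $x^*$ to $0$, so he has no profitable deviation and (A1) gives $\tau^P_R(p_c)=0$, hence $\tau^P_\emptyset(p_c)=1$ and $v_P(p_c)=e^P_\emptyset(p_c)=1-x^*$. The proposer, by contrast, can profitably undercut $x^*$ while the recipient keeps accepting, so (A1) leaves $\tau^R_P(p_c)$ free; the worst such deviation offers $0$, giving $e^R_P(p_c)=0$ and therefore $v_R(p_c)=x^*\tau^R_\emptyset(p_c)$. For the selfish structure I would note that no strategy is super-dominated, since the possibility of a rejection keeps the minimal attainable payoff at $0$ and blocks every strict super-dominance inequality, whence $\mathcal G_{p_s}=\mathcal G$; since ``$P$ offers $0$, $R$ rejects'' is a Nash equilibrium paying $(0,0)$ we obtain $e^i_\emptyset(p_s)=0$, and with $\tau^i_{-i}(p_s)=0$ this yields $v_i(p_s)=0$. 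Comparing the two structures, $v_P(p_c)=1-x^*>0=v_P(p_s)$ and $v_R(p_c)=x^*\tau^R_\emptyset(p_c)\ge 0=v_R(p_s)$, so $p_c$ maximizes both values and $\mathcal G$ is Pareto solvable with $p^*=p_c$; rescaling restores the factors of $b$.

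For the second part I would read off the induced game from its feasibility constraints $\pi_P(\sigma)\ge 1-x^*$ and $\pi_R(\sigma)\ge x^*\tau^R_\emptyset(p_c)$. A rejection pays $(0,0)$ and breaks the proposer's constraint, so the recipient must accept; on accepted profiles the constraints become $r\le x^*$ and $r\ge x^*\tau^R_\emptyset(p_c)$, confining the admissible recipient shares to the interval $[x^*\tau^R_\emptyset(p_c),\,x^*]$. I would verify purifiability (Definition~\ref{defin:purifiable games}) player by player: for the recipient the pure strategy ``accept'' weakly dominates any mixed accept/reject uniformly, since accepting a share $r$ pays $r\ge 0$ while rejecting pays $0$; for the proposer, once acceptance is fixed his payoff $1-r$ strictly decreases in $r$, so a pure minimal offer uniformly beats any randomization. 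Finally, setting $b_R=x^*\tau^R_\emptyset(p_c)$ as the recipient's guaranteed endowment and $d=r-b_R$ as the dictator's transfer, the transfer ranges over $[0,\,x^*-x^*\tau^R_\emptyset(p_c)]=[0,\,x^*\tau^R_P(p_c)]$ by $\tau^R_\emptyset(p_c)+\tau^R_P(p_c)=1$, while the dictator's budget is $b_D=1-b_R=1-x^*\tau^R_\emptyset(p_c)$; after rescaling by $b$ this is exactly the variant $\mathrm{DG}(b_D,b_R)$ with maximal donation $bx^*\tau^R_P(p_c)$ asserted in the statement.

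The step I expect to be most delicate is purifiability, together with the mixed-strategy bookkeeping forced by the fact that the induced game constrains strategy \emph{profiles} jointly rather than each player's strategy separately. Concretely, I must confirm that the uniformly dominating pure strategies, namely ``accept'' for the recipient and the minimal offer for the proposer, keep the profile inside the feasible set and dominate against \emph{every} admissible opponent strategy, not only against the equilibrium one; this is exactly where the coupling of the two constraints must be handled carefully. A related technical point is the passage from finite to continuous strategy spaces flagged in the footnote following the definition of undominated fair strategies: having shown the undominated fair set is a singleton, I would invoke this to justify working directly over the continuous offer space. By comparison, the value computation for $p_s$ is routine, resting only on correctly identifying the $(0,0)$ Nash equilibrium of the continuous Ultimatum Game.
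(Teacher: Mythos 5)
Your proof is correct and follows essentially the same route as the paper's: $v_P(p_s)=v_R(p_s)=0$ via the $(0,R)$ equilibrium and (A1), the characterisation of $\mathcal G_{p_c}$ as the single profile $(bx^*,A)$ via Lemma~\ref{lem:unique solution}, the same value computations $v_P(p_c)=b(1-x^*)$ and $v_R(p_c)=bx^*\tau^R_{\emptyset}(p_c)$, and the same purification and identification with DG$\left(b(1-x^*\tau^R_{\emptyset}(p_c)),\,bx^*\tau^R_{\emptyset}(p_c)\right)$. You are in fact slightly more careful than the paper at one point: the paper establishes inequality~(\ref{eq:inequality}) only against accepted profiles $(bx,A)$, silently passing over the fact that every rejection profile is also fair (there $W_{\{P,R\}}=\pi_P=\pi_R=0$, so $M-m=0$), whereas you explicitly eliminate rejections by noting that $(bx^*,A)$, paying $(b(1-x^*),bx^*)$ with both coordinates positive, super-dominates them --- precisely the step needed to conclude that $\text{Max}(\text{Fair}(\mathcal G,p_c))$ is a singleton.
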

\begin{proof}
First, observe that the game generated by the selfish coalition structure $p_s = \left(\{P\}, \{R\}\right)$ coincides with the original game (i.e., $\mathcal G_{p_s}=\mathcal G$), as there are no super-dominated strategies. Now, while this game may have different Nash equilibria, the lowest payoff for both players is obtained at the equilibrium profile $(0,R)$ where the proposer makes an offer of 0 to the recipient, and the recipient rejects. In this case, both players get nothing, and since by (A1) the probability of deviation from $p_s$ is 0 for both players, they both have a zero value for the selfish coalition structure $p_s$. Thus, $v_P(p_s)=v_R(p_s)=0$.

For the cooperative coalition structure, $p_c = \left(\{P, R\}\right)$, we now demonstrate that its generated game $\mathcal G_{p_c}$ only allows a single strategy profile $(bx^*, A)$, where the proposer makes an offer of $bx^*$ and the recipient accepts. To this end, we only need to show that, for all $x\in[0,1]$ such that $x \neq x^*$, one has
\begin{align}\label{eq:inequality}
M_{\{P, R\}}(bx^*, A)-m_{\{P, R\}}(bx^*, A)<M_{\{P, R\}}(bx, A)-m_{\{P, R\}}(bx, A).
\end{align}
But this is straightforward from the definition of $x^*$ as the unique solution for 
\begin{align*}
M_{\{P, R\}}(bx, A)-m_{\{P, R\}}(bx, A) = b\left|{\rm Agg}_P(1,1-x)-{\rm Agg}_R(1,x)\right| = 0.
\end{align*}
implying that the left-hand side of inequality~(\ref{eq:inequality}) equals zero, and the right-hand side is strictly positive. 

Now, if the proposer is assumed to make an offer of $bx^* > 0$, then it is not beneficial for the recipient to deviate from accepting the offer, and hence, by (A1), $\tau^P_{\emptyset}(p_c)=1$. Thereby, since $e^P_{\emptyset}(p_c)=b(1-x^*)$, we have $v_P(p_c)=b(1-x^*)>0$. 

On the other hand, if it is believed that the recipient will accept, then the proposer has an incentive to deviate and make a smaller offer. In the worst case for the recipient, the proposer offers him zero, so  $e^R_{P}(p_c)=0$. We thus get $v_R(p_c)=bx^* \tau^R_{\emptyset}(p_c)$. Finally, by (A2), we have that $\tau^R_{\emptyset}(p_c) >0$, which, coupled with $bx^* > 0$, yields $v_R(p_c)>0$. 

Hence, the cooperative coalition structure $p_c$ maximises the values for both players, so the Ultimatum Game is Pareto solvable. 

We now turn to prove the second part of the theorem, stating that the induced game ${\rm Ind}(\mathcal G,p_c)$ is purifiable. To avoid unnecessary technical complications, we only consider finitely supported strategies for the proposer, even if his strategy set is given by the continuous interval $[0,b]$. That is, we assume that any proposer's mixed strategy $\sigma_P$ is given by a finite sequence of coefficients $\left(\lambda_{(1)},\ldots,\lambda_{(K)}\right)$, for some $K \in \NN$, such that $\lambda_{(k)}>0$ represents the probability of playing a pure strategy $s_{(k)} \in [0,b]$. Hence, $\sum_{k=1}^K \lambda_{(k)}=1$, and w.l.o.g. we assume that the strategies $s_{(k)}$ are ordered in ascending order. 

We next demonstrate that any such mixed strategy $\sigma_P$ can be replaced by the pure strategy $s_{(1)}$. Indeed, for any mixed strategy $\sigma_R = \left(\lambda,1-\lambda\right)$ of the recipient, with $\lambda$ being the probability of $s_R=A$ (accept), we get
\begin{align*}
\pi_P(\sigma_P, \sigma_R)
&= \lambda\left(\lambda_{(1)}(b-s_{(1)})+\ldots+\lambda_{(K)}(b-s_{(K)})\right)\\
&< \lambda(b-s_{(1)})(\lambda_{(1)}+\ldots+\lambda_{(K)})\\
&= \lambda(b-s_{(1)}) = \pi_P(s_{(1)},\sigma_R).
\end{align*}
Similarly, any mixed strategy $\sigma_R = \left(\lambda,1-\lambda\right)$ of the recipient, where $\lambda < 1$, can be replaced by the pure strategy $s_R=A$, as for any strategy $\sigma_P$ of the proposer we have
\begin{align*}
\pi_R(\sigma_P, \sigma_R)
&= \lambda\left(\lambda_{(1)}s_{(1)}+\ldots+\lambda_{(K)}s_{(K)}\right)\\
&< \lambda_{(1)}s_{(1)}+\ldots+\lambda_{(K)}s_{(K)}\\
&= \pi_R(\sigma_R, A).
\end{align*}
Finally, in the purification of the induced game $\text{Ind}(\mathcal G,p_c)$, the proposer makes an offer of at least $v_R(p_c) = bx^* \tau^R_{\emptyset}(p_c)$, but of at most $b - v_P(p_c) = bx^*$, and the recipient must accept. The game then coincides with the variant of the Dictator Game with parameters $b_P = b\left(1-x^* \tau^R_{\emptyset}(p_c)\right) \geq b\left(1 - x^*\right)$ and $b_R =  bx^* \tau^R_{\emptyset}(p_c)$, where the donation that the dictator can make to the recipient does not exceed 
\begin{align*}
b_P - v_P(p_c) 
& = b\left(1-x^* \tau^R_{\emptyset}(p_c) - (1 - x^*)\right) \\
& = bx^* ( 1 - \tau^R_{\emptyset}(p_c)) = bx^*\tau^R_{P}(p_c).
\end{align*}
This completes the proof. 
\end{proof}
Theorem~\ref{thm:UG} implies that the Ultimatum Game has a unique pure strategy cooperative equilibrium, in which the proposer makes an offer of $x^* \tau^R_{\emptyset}(p_c)$, and the recipient accepts. Moreover, since the induced game of UG essentially coincides with a corresponding variant of the Dictator Game, the cooperative equilibrium predicts a correlation between these two games. Specifically, the theorem suggests that
\begin{corollary}\label{UGvsDG}
 In a Ultimatum Game $\mathcal G$, the proposer makes an offer of 
\begin{align}\label{eq:prediction with altruism}
bx^* \tau^R_{\emptyset}(p_c)+d\left(b\left(1 - x^* \tau^R_{\emptyset}(p_c)\right),bx^* \tau^R_{\emptyset}(p_c)\right),
\end{align}
where $d\left(b\left(1 - x^* \tau^R_{\emptyset}(p_c)\right),bx^* \tau^R_{\emptyset}(p_c)\right)$ is the donation that the proposer would make to the recipient if he were to act as the dictator in DG$(b_P, b_R)$  with parameters $b_P = b\left(1-x^* \tau^R_{\emptyset}(p_c)\right)$ and $b_R =  bx^* \tau^R_{\emptyset}(p_c)$, and we expect that $d\left(b\left(1 - x^* \tau^R_{\emptyset}(p_c)\right),bx^* \tau^R_{\emptyset}(p_c)\right) \leq bx^*\tau^R_{P}(p_c)$. 
\end{corollary}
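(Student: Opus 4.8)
The plan is to obtain the statement as a direct reading of the identification already established in Theorem~\ref{thm:UG}, combined with the behavioural interpretation of the induced game. By that theorem the Ultimatum Game is Pareto solvable with optimal coalition structure $p_c$, and the induced game ${\rm Ind}(\mathcal G, p_c)$ is purifiable with purification coinciding with the Dictator Game variant DG$(b_P, b_R)$, where $b_R = bx^*\tau^R_{\emptyset}(p_c)$ and $b_P = b(1 - x^*\tau^R_{\emptyset}(p_c))$. Since by Definition~\ref{defin:cooperative equilibrium} a cooperative equilibrium is a Nash equilibrium of ${\rm Ind}(\mathcal G, p_c)$, and since purifiability (Definition~\ref{defin:purifiable games}) lets us restrict to pure profiles, I would first record that in the cooperative equilibrium the recipient accepts and the proposer chooses a pure offer $p$ in the feasible interval $[v_R(p_c), b - v_P(p_c)] = [bx^*\tau^R_{\emptyset}(p_c), bx^*]$.

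The next step is to make the translation between an offer in the induced Ultimatum Game and a donation in DG$(b_P, b_R)$ explicit. Under the identification of Theorem~\ref{thm:UG}, the recipient enters the Dictator Game variant with a guaranteed endowment $b_R = bx^*\tau^R_{\emptyset}(p_c)$, and the proposer, acting as dictator, decides how much to add to this endowment out of his own share $b_P$. Hence an offer $p$ corresponds precisely to a donation $d = p - b_R$, so that $p = b_R + d = bx^*\tau^R_{\emptyset}(p_c) + d$. Writing $d = d(b_P, b_R)$ for the donation that the proposer would make in DG$(b_P, b_R)$ and substituting yields the announced formula~(\ref{eq:prediction with altruism}). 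The upper bound then follows from the feasibility constraint $p \leq bx^*$: indeed $d = p - bx^*\tau^R_{\emptyset}(p_c) \leq bx^* - bx^*\tau^R_{\emptyset}(p_c) = bx^*(1 - \tau^R_{\emptyset}(p_c)) = bx^*\tau^R_P(p_c)$, which is exactly the donation ceiling recorded in Theorem~\ref{thm:UG}.

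The step requiring the most care is conceptual rather than computational: one must justify replacing the proposer's choice in the induced game by the donation function $d(b_P, b_R)$ that describes his behaviour as a dictator. The purified induced game is, as a strategic form, literally a Dictator Game variant in which the recipient is passive, so a purely money-maximising dictator would donate nothing and the bare cooperative equilibrium would predict the offer $bx^*\tau^R_{\emptyset}(p_c)$. The content of the corollary is that the proposer's behaviour in the two games is driven by the same underlying preference over splits, so that whatever donation $d(b_P, b_R)$ the proposer actually makes in DG$(b_P, b_R)$ is exactly the additional amount he offers above the recipient's endowment in the induced Ultimatum Game. Making this identification of behaviour across the two strategically equivalent problems---and thereby interpreting $d$ as an empirically observable quantity rather than the trivial selfish value---is the crux of the argument; once granted, the formula and the bound are immediate from the bookkeeping above.
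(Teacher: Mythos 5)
Your proposal is correct and takes essentially the same approach as the paper, which states the corollary as a direct reading of Theorem~\ref{thm:UG}: the offer is the recipient's guaranteed endowment $b_R = bx^*\tau^R_{\emptyset}(p_c)$ in the purified induced game plus the donation $d(b_P,b_R)$, with the ceiling obtained exactly as in the theorem's final computation, $b_P - v_P(p_c) = bx^*\bigl(1-\tau^R_{\emptyset}(p_c)\bigr) = bx^*\tau^R_{P}(p_c)$. You also rightly identify that interpreting $d(b_P,b_R)$ as the proposer's actual (behavioural) donation rather than the selfish Nash donation of zero is the substantive content, which matches the paper's own framing of the corollary as what the theorem ``suggests'' rather than formally implies.
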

Now, in order to obtain testable predictions of players' behaviour, we need to evaluate the parameters $x^*$ and $\tau^i_{D}(p)$. As it is often in experiments with anonymous participants, the average experimental subjects may overgeneralise their experience and expect their opponents to have the same perception of money as they do, so, in order to make predictions of average behaviour in anonymous games, we assume that all players have the same aggravation function. That is, for any two subjects $i$ and $j$, we have ${\rm Agg}_i(x,y)  = {\rm Agg}_j(x,y)$. Applying this assumption in Lemma~\ref{lem:unique solution}, we get $x^*=0.5$, which allows us to make the following individual level predictions:
\begin{itemize}
\item \emph{Local prediction.} In a UG experiment with anonymous players, \emph{most} proposers will offer to their opponents a payment of $b \cdot 0.5  \tau^R_{\emptyset}(p_c)+d\left(b\left(1 - 0.5  \tau^R_{\emptyset}(p_c)\right), b \cdot 0.5 \tau^R_{\emptyset}(p_c)\right)$.
\end{itemize} 
To estimate the probabilities of deviation from a given coalition structure, we adopt the idea from~\cite{capraro2013model}. Specifically, these probabilities depend on the \emph{incentive} for each player to abandon a coalition structure, and his possible \emph{loss} from doing so. Here, we define the incentive, $I_{-i}(p)$, for a player $-i \in N$ to abandon the coalition structure $p$, as the aggravation in renouncing to the maximal possible gain that $-i$ can obtain by leaving the coalition structure $p$, while his opponent (player $i$) plays according to $p$:  
$$
I_{-i}(p) = \sup\left\{\text{Agg}_{-i}\left(\pi_{-i}(\sigma^p_{i}, \sigma_{-i}), \pi_{-i}(\sigma^p_{i}, \sigma_{-i}^p)\right) : \sigma^p\in\text{Nash}(\mathcal G _p), \sigma_{-i}\in\Delta (S_{-i})\right\}
$$
However, player $i$ may also deviate from $p$ (whether he anticipates player $-i$'s deviation or not), thus yielding a loss for player $-i$. We define such a loss, $L_{-i}(p)$, of a player $-i$ when he leaves a coalition structure $p$ in order to maximise his payoff as the maximal aggravation that he may experience, while player $i$ also may deviate from the coalition structure $p$ (either following his selfish interests or anticipating player $-i$'s deviation). Formally, 

{\scriptsize $$
L_{-i}(p) = \sup\left\{\text{Agg}_{-i}\left(\pi_{-i}(\sigma_{i}, \sigma_{-i}), \pi_{-i}(\sigma^p_{i}, \sigma^p_{-i})\right) : \begin{array}{l} \; \sigma^p\in\text{Nash}(\mathcal G _p), \sigma_{-i} \in \argmax_{\Delta (S_{-i})} \pi_{-i}(\sigma^p_{i}, \sigma_{-i}), \sigma_{i} \in \Delta (S_{i}) \\ \textrm{ s.t.} \;\;\;\, \pi_{i}(\sigma_i, \sigma^p_{-i}) > \pi_{i}(\sigma^p_i, \sigma^p_{-i}) \;\;\; \textrm{or} \;\;\;  \pi_{i}(\sigma_i, \sigma_{-i}) > \pi_{i}(\sigma^p_i, \sigma_{-i})\end{array}\right\}
$$}

\noindent That is, for each $\sigma^p\in\text{Nash}(\mathcal G _p)$, mixed strategy $\sigma_{-i}$ maximises the payoff of player $-i$ from $(\sigma^p_i,\sigma_{-i})$, and $\sigma_{i}$ runs over the set of  $i$-deviations from either $(\sigma_i^p,\sigma_{-i}^p)$ or $(\sigma^p_i,\sigma_{-i})$. 

Given this, we finally set: 
\begin{align}\label{eq:parameters}
\tau^i_{-i}(p)=\frac{I_{-i}(p)}{I_{-i}(p)+L_{-i}(p)}
\end{align}

As we previously mentioned in Example~\ref{ex:every stake}, in a low-stake experiment, function Agg$_i(x,y)=x-y$ appears a good approximation of the real players' aggravation. For such a function, it is immediate that $I_{-i}(p_c)=x^*=0.5$ and $L_{-i}(p_c)=1-x^*=0.5$. We therefore can make the following prediction: 
\begin{itemize}
\item \emph{Global prediction.}  In a UG experiment with anonymous players and low stakes, it is expected that the average offer is composed of $25\%$ of the total endowment, $b$, and the average donation in DG$(0.75b, 0.25b)$.
\end{itemize} 
We test these predictions experimentally in Section~\ref{sec:experiment}. We conclude the section with presenting analogous results for the Nash Bargaining Problem.

\subsection{Nash Bargaining Problem}\label{subsec:NBP}
\noindent
Consider the classical Nash Bargaining Problem, where two players bargain on how to share $b$ units of surplus, where they get their claims satisfied if and only if these claims are feasible, that is, they sum up to less than or equal to $b$. Then,
\begin{theorem}\label{thm:NBP}
In the only pure cooperative equilibrium of the Nash Bargaining Problem, one player asks for $bx^*$ and the other player claims $b\left(1-x^*\right)$.
\end{theorem}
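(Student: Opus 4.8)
The plan is to mirror the proof of Theorem~\ref{thm:UG}, analysing the two coalition structures $p_s=(\{1\},\{2\})$ and $p_c=(\{1,2\})$ and showing that $p_c$ maximises both players' values. Normalise $b=1$, so a pure profile is a pair of claims $(c_1,c_2)\in[0,1]^2$ with payoffs $\pi_i=c_i$ when $c_1+c_2\le 1$ and $\pi_i=0$ otherwise. First I would dispose of the selfish structure: for a singleton coalition no claim super-dominates another, since the minimal payoff of any claim is $0$ (the opponent can always over-claim), hence $\mathcal G_{p_s}=\mathcal G$. Every efficient profile $c_1+c_2=1$ is a Nash equilibrium, and the lowest equilibrium payoff of each player is $0$ (attained when the opponent claims the entire budget); since no unilateral deviation from a Nash profile is profitable, (A1) gives $\tau^i_{-i}(p_s)=0$ and therefore $v_1(p_s)=v_2(p_s)=0$.

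The crux is the cooperative structure. On the feasible region $W(c_1,c_2)=c_1+c_2$ and above it $W=0$, so the fairness index $M_{\{1,2\}}-m_{\{1,2\}}$ vanishes exactly when $\mathrm{Agg}_1(W,c_1)=\mathrm{Agg}_2(W,c_2)$. For each welfare level $W\in(0,1]$ the monotonicity argument of Lemma~\ref{lem:unique solution} (run with total $W$ in place of $1$) yields a unique fair split $(g(W),h(W))$ with $g(W)+h(W)=W$, and at the maximal level $W=1$ this split is precisely $(1-x^*,x^*)$ by the definition of $x^*$. I would then use super-dominance to discard every remaining fair profile: the disagreement profiles pay $(0,0)$ and are dominated, and each inefficient fair split is dominated by the full-welfare one. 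This last reduction is the main obstacle. Unlike the Ultimatum Game, where ``accept'' and ``reject'' expose only the two welfare levels $1$ and $0$, the Bargaining Problem carries a whole continuum of fair profiles, and showing that the full-welfare split Pareto-dominates all inefficient fair splits amounts to controlling the monotonicity of $W\mapsto(g(W),h(W))$ along the frontier. For the symmetric aggravation the paper ultimately uses, $g(W)=h(W)=W/2$ and this monotonicity is immediate; in general I would argue it by comparing the common aggravation value across welfare levels, using strict monotonicity of $\mathrm{Agg}_i$ in each argument. Granting this, $\mathrm{Max}(\mathrm{Fair}(\mathcal G,p_c))=\{(1-x^*,x^*)\}$ and $\mathcal G_{p_c}$ is a one-strategy game.

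Next I would compute the values. The distinctive feature of the Bargaining Problem is that the cooperative profile $(1-x^*,x^*)$ is \emph{already} a Nash equilibrium of $\mathcal G$: against a claim of $1-x^*$, claiming more than $x^*$ busts the budget and pays $0$, while claiming less pays less. Hence neither player has a profitable deviation from $\mathcal G_{p_c}$, so (A1) gives $\tau^i_{-i}(p_c)=0$, and restoring the budget, $v_1(p_c)=b(1-x^*)$ and $v_2(p_c)=bx^*$. Since these strictly exceed $v_i(p_s)=0$, the structure $p_c$ maximises both values and the game is Pareto solvable with $p^*=p_c$.

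Finally I would read off the induced game. It retains only profiles with $\pi_1(\sigma)\ge b(1-x^*)$ and $\pi_2(\sigma)\ge bx^*$; since realised payoffs sum to at most $b$ while these two bounds already sum to $b$, both bounds must bind with equality and the payoffs must sum to $b$ almost surely. Because the players randomise independently, a sum that equals $b$ with probability one forces both marginals to be degenerate, so the only admissible profile is the pure one $c_1=b(1-x^*)$, $c_2=bx^*$. This makes the induced game trivially purifiable and furnishes the unique pure cooperative equilibrium claimed in the statement. I would close by contrasting this with Theorem~\ref{thm:UG}: here no deviation discounts $v_2$ by a factor $\tau^R_{\emptyset}(p_c)$, so the induced game collapses to a single point rather than to a Dictator-game range.
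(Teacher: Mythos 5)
The paper never writes this proof out---it says only that the argument ``follows similar lines'' to Theorem~\ref{thm:UG}---and your skeleton does track those lines faithfully: the selfish structure is worth zero to both players via the $(0,0)$-payoff equilibria and (A1); the cooperative structure is pinned down by the fairness equation of Lemma~\ref{lem:unique solution}; the values are compared; the induced game is collapsed. Your two observations at the points where NBP genuinely diverges from UG are both correct and well spotted: first, $(b(1-x^*),bx^*)$ is already a Nash equilibrium of $\mathcal G$ itself, so (A1) gives $\tau^i_{-i}(p_c)=0$ for \emph{both} players and no $\tau^R_{\emptyset}(p_c)$ discount appears in $v_2(p_c)$; second, since the two value bounds sum to exactly $b$ while realised payoffs sum to at most $b$, independence of the marginals forces the induced game down to the single pure profile of the statement, which makes purifiability trivial here rather than requiring the mixed-strategy computation of Theorem~\ref{thm:UG}.

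The one genuine gap is the step you flag and then ``grant'': monotonicity of the fair split $W\mapsto (g(W),h(W))$ along welfare levels. Your proposed justification (comparing the common aggravation value across welfare levels via strict monotonicity of each $\mathrm{Agg}_i$) cannot work in general, because the axioms on aggravation functions do not imply it. Take $\mathrm{Agg}_1(x,y)=x-y$ and $\mathrm{Agg}_2(x,y)=e^{Kx}-e^{Ky}$, both admissible; the fairness condition at welfare $W$ reads $h+e^{Kh}=e^{KW}$, and for $K=5$ one computes $g(0.5)\approx 0.008 > g(1)\approx 0.001$, so player 1's fair share \emph{decreases} in $W$, the full-welfare fair profile does not Pareto-dominate (hence does not super-dominate) all inefficient fair profiles, and $\mathrm{Max}(\mathrm{Fair}(\mathcal G,p_c))$ need not be the singleton you want. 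Fortunately the step is dispensable: in $\mathcal G_{p_c}$ the grand coalition is a \emph{single} player whose payoff is the total welfare, so $\mathrm{Nash}(\mathcal G_{p_c})$ consists exactly of the welfare-maximising strategies. The profile $(b(1-x^*),bx^*)$ always lies in $\mathrm{Max}(\mathrm{Fair}(\mathcal G,p_c))$ (nothing can pay both players strictly more, as totals cannot exceed $b$), it is the unique fair profile of welfare $b$ by Lemma~\ref{lem:unique solution}, and by continuity of the aggravations it is also the unique welfare-$b$ point of the closed convex hull, so $\mathrm{Nash}(\mathcal G_{p_c})=\{(b(1-x^*),bx^*)\}$ whether or not the inefficient fair profiles survive super-dominance. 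Substituting this for your monotonicity claim---or simply restricting to identical aggravations, where $g(W)=W/2$ and monotonicity is immediate---the remainder of your proof goes through verbatim.
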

The proof of Theorem~\ref{thm:NBP} follows similar lines to the proof of Theorem~\ref{thm:UG}, and hence is omitted. Finally, since for players with the same perception of money we have $x^*=0.5$, this yields the following corollary.
\begin{corollary}
If the players have the same perception of money, then in the unique cooperative equilibrium of the Nash Bargaining Problem, each player asks for exactly half of the endowment.
\end{corollary}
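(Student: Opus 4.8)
The plan is to derive this corollary as an immediate specialisation of Theorem~\ref{thm:NBP}, so that the only genuine content is pinning down the value of $x^*$ under the hypothesis of a common perception of money. Theorem~\ref{thm:NBP} already tells us that in the unique pure cooperative equilibrium one player asks for $bx^*$ and the other for $b(1-x^*)$, where $x^*$ is the quantity furnished by Lemma~\ref{lem:unique solution}, namely the unique solution in $[0,1]$ of ${\rm Agg}_1(1,1-x) = {\rm Agg}_2(1,x)$. Thus it suffices to show that this solution is exactly $x^* = 1/2$ whenever the two players share the same aggravation function, and then substitute.

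First I would invoke the equal-perception hypothesis to write ${\rm Agg}_1 = {\rm Agg}_2 = {\rm Agg}$, which collapses the defining equation of $x^*$ into
\begin{align*}
{\rm Agg}(1,1-x) = {\rm Agg}(1,x).
\end{align*}
Next I would use the structural assumption on aggravation functions stated in Section~\ref{subsec:undominated fair strategies}, that ${\rm Agg}(x,y)$ is \emph{strictly} decreasing in its second argument $y$. Strict monotonicity makes $y \mapsto {\rm Agg}(1,y)$ injective on the relevant domain, so the displayed equality of the two values forces equality of the arguments, i.e. $1-x = x$, whence $x = 1/2$. Since Lemma~\ref{lem:unique solution} guarantees uniqueness of the solution and that it is interior, this identifies $x^* = 1/2$ unambiguously.

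Finally I would substitute $x^* = 1/2$ back into the equilibrium description from Theorem~\ref{thm:NBP}: the two demands become $bx^* = b/2$ and $b(1-x^*) = b/2$, so each player asks for exactly half of the endowment, as claimed. I do not anticipate any real obstacle here — the entire argument is a one-line monotonicity deduction followed by a substitution — the only point requiring care is to cite the strict (rather than weak) monotonicity of ${\rm Agg}$ in $y$, since it is precisely strictness that rules out any interior solution other than $1/2$ and thereby guarantees the symmetric split.
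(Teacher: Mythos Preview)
Your proposal is correct and follows essentially the same approach as the paper: the corollary is obtained from Theorem~\ref{thm:NBP} by observing that a common aggravation function forces $x^*=1/2$ (the paper simply states this consequence without elaboration, having already remarked after Theorem~\ref{thm:UG} that equal perception of money gives $x^*=0.5$ via Lemma~\ref{lem:unique solution}). Your explicit appeal to strict monotonicity in the second argument to pin down $x^*=1/2$ is exactly the right justification.
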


%%%%%%%%%%%%%%%%%%%%%%%%%%%%%%%%%%%%%%%%%%%%%%%%%%
%%%%%%%%%%%%%%%%%%%%%%%%%%%%%%%%%%%%%%%%%%%%%%%%%%

\section{Behavioural experiment investigating the correlation between  Ultimatum Game and Dictator Game}\label{sec:experiment}
\noindent
In this section, we report on the experiment we conducted to test the correlation between the UG offers and the DG donations as our model predicts. We start with the outline of the experiment, and then discuss the results we obtained.  
%%%%%%%%%%%%%%%%%%%%%%%%%%%%%%%%%%%%%%%%%%%%%%%%%%

\subsection{Design}\label{subsec:design}
\noindent
We recruited US subjects (a.k.a. \emph{workers}) via Amazon Mechanical Turk (AMT)~\cite{paolacci2010running,horton2011online,rand2012promise}. As is common in experiments with human participants (either online or in physical laboratories), AMT workers receive a baseline payment for participation and can also earn an additional bonus that depends on how they perform in the game. The main advantage of the AMT experiments is in that they  are easily implementable and relatively inexpensive, compared to laboratory experiments. Nevertheless, it has been shown that data gathered using AMT agree both qualitatively and quantitatively with data collected in physical labs~\cite{horton2011online,rand2012promise,suri2011cooperation}. 

However, when planning such an experiment, one should be aware of the fact that AMT workers may try to play the same game multiple times in order to get a larger bonus and/or select random strategies in order to minimise the time spent on completing the task. To control for these issues, we used Qualtrics---a survey builder that allowed us to exclude those workers who have already participated in the experiment (by checking their AMT ID and their IP address) or those who failed to answer the comprehension questions that we designed for each part of the task to make sure that the workers understand the task and would make sensible decisions. 

The task that the workers had to perform in the experiment consisted of two parts: play a given role in the Ultimatum Game and in the Dictator Game. The experiment contained two treatments, differing in the order in which the UG and the DG were played. In both treatments, subjects earned $\$0.20$ for participating and received a bonus that was determined from their performance in the two games. In more detail: 
\begin{itemize}
\item \emph{Treatment 1.}  Subjects were randomly assigned a role of either the proposer/dictator or the recipient. First, they played the UG as follows. The proposers were given a budget of $\$0.40$ and asked how much, if at all, they would offer to an anonymous recipient they have been randomly paired with, while the recipient may either accept or reject the offer. Responders, at the same time, were asked to declare their minimal acceptable offer. \\ 
Second, the DG was played. The proposers (playing now the role of a dictator) were notified that they have been assigned another anonymous worker to interact with. They were given another $\$0.30$ and informed that their current partner is endowed with only $\$0.10$. Then they were asked how much, if at all, they would donate to the other participant, who has no say and can only accept their offer.
\item \emph{Treatment 2.} As above, but playing the DG first and the UG second. 
\end{itemize} 

In both games, we restricted the set of possible strategies to multiples of 2 cents, within the given budget. Thus, a proposer in the Ultimatum Game could only make an offer of $\{0, 2, 4, \ldots, 40\}$ cents. Full instructions for participants are presented in the Appendix.

%%%%%%%%%%%%%%%%%%%%%%%%%%%%%%%%%%%%%%%%%%%%%%%%%%

\subsection{Results}\label{subsec:results}
\noindent
The results for Treatment 1 are summarised in Table~\ref{table:stat study 1} and Table~\ref{table:prediction study 1}. We had 172 subjects that played as proposers in the Ultimatum Game and as dictators in the Dictator Game.

In detail, the mean offer in the UG was $18.18$c, which constitutes $45\%$ of the budget. More than half of the subjects (about $65\%$) offered the equal split, which was also the modal and the median offer. Only one subject acted consistently with the subgame perfect equilibrium for payoff-maximising preferences, which is to offer the recipient nothing at all. We note that these results appear consistent with those reported in other studies~\cite{kagel1995handbook,camerer2003behavioral,oosterbeek2004cultural}. 

In the DG, the mean offer was $5.40$c. The distribution of donations was essentially bimodal: $42\%$ of the subjects decided to act selfishly and make no donation at all, while $40\%$ of the participants acted so as to minimise inequity and donate $10$c. These results are qualitatively similar to those presented in previous studies, although in previous reports the selfish behaviour was observed significantly more often than the inequity averse behaviour~\cite{engel2011dictator}. This difference may appear related to the framing effect caused by the fact that the DG was played after the UG in this treatment. However, we cannot reject the null hypothesis that donations in Treatment 1 and Treatment 2 (differing by the order of the UG and DG) are actually the same. In fact, we ran the non-parametric Wilcoxon Rank-sum test, i.e., the standard hypothesis test for data samples that cannot assumed to be normally distributed like our case  \cite{wilcoxon1945individual}, and the returned significance value is $P=0.1969$, which does not reject the null hypothesis. 

\begin{figure}
  \begin{subfigure}[b]{0.5\textwidth}
    \includegraphics[width=\textwidth]{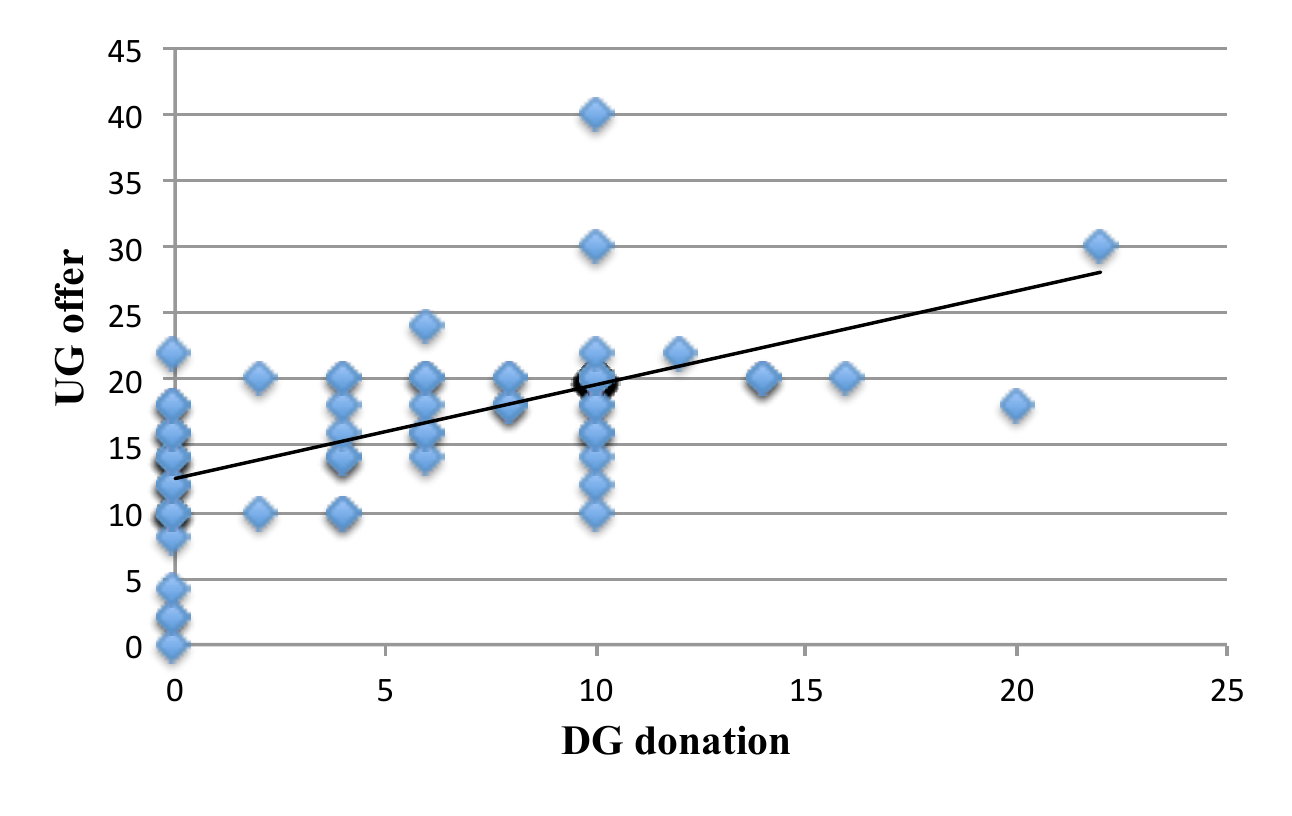}
    \caption{Treatment 1: UG before DG}
    \label{fig:tr1}
  \end{subfigure}
  \begin{subfigure}[b]{0.5\textwidth}
    \includegraphics[width=\textwidth]{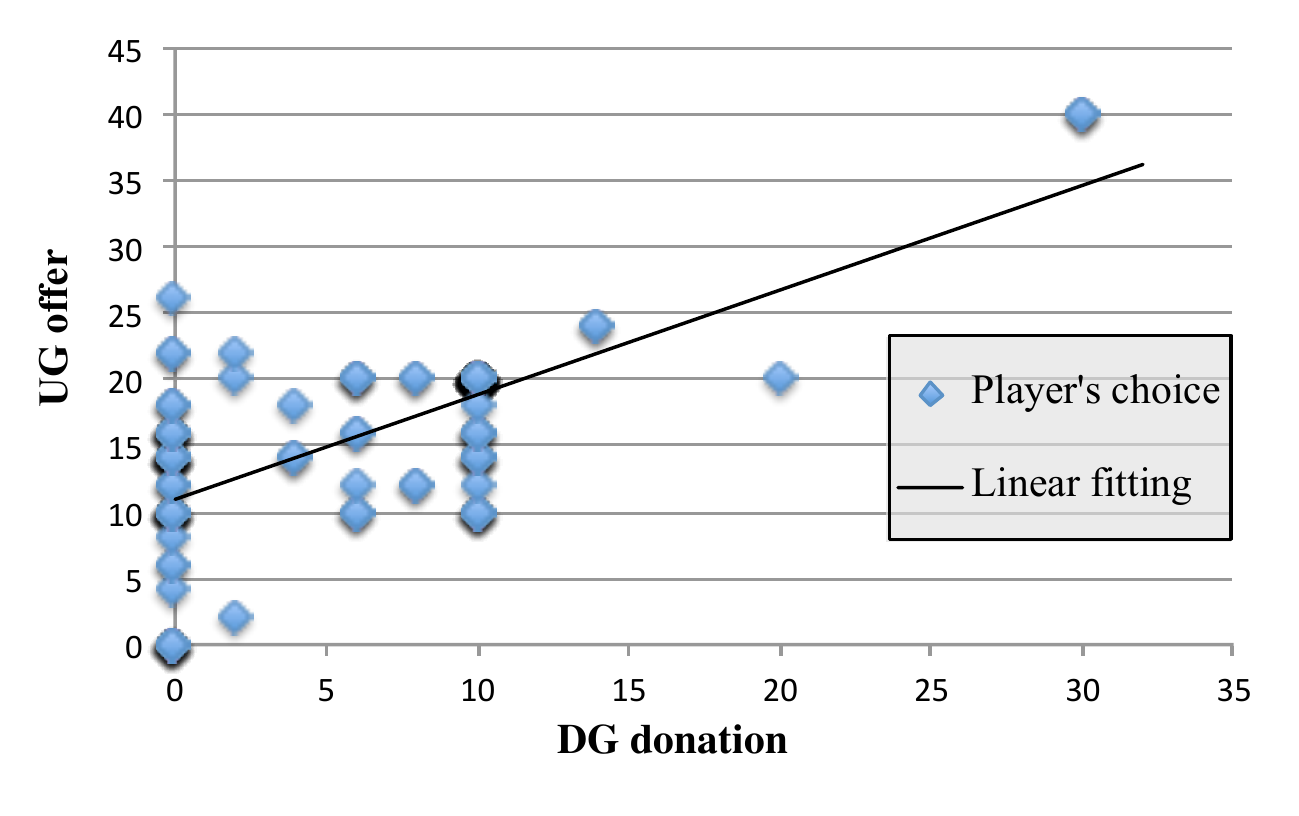}
    \caption{Treatment 2: DG before UG}
    \label{fig:tr2}
  \end{subfigure}
  \caption{Scatter plot of the  players' choices in the two treatments: UG played after DG (a) and DG played after UG (b), after filtering the players with $0c$ donation in DG and an equal split offer in UG.}
\end{figure}

In contrast, we find a significant positive correlation between the UG offers and the DG donations (Spearman's $\rho$ rank correlation test: $\rho = 0.3412, P<.0001$). In particular, we observe that $43\%$ of the subjects played exactly according to our global prediction based on the cooperative equilibrium:
\begin{align}\label{eq:prediction}
p(40) = d(30, 10) + 10
\end{align}
That is, their offer in the Ultimatum Game with $b=40$c was composed of their donation in the Dictator Game with parameters $b_D= 0.75b=30$ and $b_R =0.25b=10$, and additional $10$c (i.e.,  $25\%$ of the total endowment in the UG).

We test the quality of this prediction by fitting the following linear equation:
\begin{align}\label{eq:pol-fitting}
p(40) = c_{\text{fitting}}(1)d(30, 10) + c_{\text{fitting}}(0)
\end{align}
where the coefficients $c_{\text{fitting}}(0)$, $c_{\text{fitting}}(1)$ are estimated through least square error. This linear fit is also showed in Figure 1a. We compare the coefficients with the coefficients $c_{\text{model}}(0)$ and $c_{\text{model}}(1)$ fitted by our prediction model by computing their corresponding mean squared errors, MSE$_{\text{fitting}}$ and MSE$_{\text{model}}$. We find that $c_{\text{fitting}}(0)=16.3912$ and $c_{\text{fitting}}(1)=0.3320$, with the mean squared error MSE$_{\text{fitting}}$ = $17.6383$, while the cooperative equilibrium model prescribes $c_{\text{model}}(0)=10$ and $c_{\text{model}}(1)=1$, resulting in the mean squared error MSE$_{\text{model}}$ = $37.0467$.

At first glance, our prediction may seem quite inaccurate, compared to the linear fitting. However, it turns out that the difference in equations~(\ref{eq:prediction}) and~(\ref{eq:pol-fitting}) is mainly due to a subset of 42 subjects (about $24\%$)  who demonstrated completely inconsistent behaviour in playing the two games---that is, they decided for the equal split in the UG (offering the recipient $20$c), but acted selfishly in the DG (donating $0$c). Note that this type of subjects is also very problematic for the inequity aversion models, as they act in an inequity averse way in the UG and the opposite in the DG. We refer to Section~\ref{sec:discussion} for a more detailed discussion about this issue.

Taking out these 42 inconsistent subjects allows us to observe significantly stronger correlations. When only consistent participants were taken into account, the mean UG offer was equal $17.60$c and the mean DG donation was $7.15$c. The correlation between the UG offers and the DG donation also gets much stronger (Spearman's $\rho$ rank correlation: $\rho=0.7136, P<.0001$). Moreover, $57\%$ of these subjects played exactly according to equation~(\ref{eq:prediction})). Finally, the linear fitting now gives $p_{\text{fitting}}(0)=12.5308$, $p_{\text{fitting}}(1)=0.7086$ and MSE$_{\text{fitting}}=14.6281$, which is much closer to the cooperative equilibrium model with $p_{\text{model}}(0)=10$, $p_{\text{model}}(1)=1$ and MSE$_{\text{model}}=16.7077$. This trend is also shown by the scatter plot of the players' choices together with their linear fitting in Figure \ref{fig:tr1}.

\begin{table}[ht]
\centering
\begin{tabular}{|c | c | c | c | c | c | c | c | c | c }
\hline\hline
%\multirow{3}{*}{} & \multicolumn{3}{c}{$\alpha=0.5$}  & \multicolumn{3}{c}{$\alpha=0.75$}  \\ [0.5ex]
& Mean Offer & Mean donation & Correlation \\ [0.5ex]
\hline
All & 18.18 &5.40&0.3412$^{***}$ \\
No (20,0) & 17.60&7.15 & 0.7136$^{***}$\\[1ex]
\hline
\end{tabular}
\vspace{0.3cm}
\caption{Statistical analysis of Treatment 1. Spearman's $\rho$ rank correlation test. $^{***}$ stands for $P<.0001$.}\label{table:stat study 1}
\end{table}

\begin{table}[ht]
\centering
\begin{tabular}{|c | c | c | c | c | c | c | c | c | c }
\hline\hline
%\multirow{3}{*}{} & \multicolumn{3}{c}{$\alpha=0.5$}  & \multicolumn{3}{c}{$\alpha=0.75$}  \\ [0.5ex]
& Linear fitting & Model & MSE fitting & MSE model \\ [0.5ex]
\hline
All &$p(0)=16.40, p(1)=0.33$&$p(0)=10,p(1)=1$&17.64 &37.05\\
No (20,0) & $p(0)=12.53, p(1)=0.71$&$p(0)=10,p(1)=1$ & 14.63&16.71\\[1ex]
\hline
\end{tabular}
\vspace{0.3cm}
\caption{Comparison between polynomial fitting model at degree $n=1$ and predictions of the cooperative equilibrium in Treatment 1.}\label{table:prediction study 1}
\end{table}

The results for Treatment 2 are given in Tables~\ref{table:stat study 2} and~\ref{table:prediction study 2}. We had 198 subjects that played as dictators in the Dictator Game and as proposers in the Ultimatum Game. The mean donation was $5.39$c, while the distribution of donations was essentially bimodal with half of the participants donating $0$c and $37\%$ making the inequity minimising donation of $10$c. While these distributions of donations seems to be more left-weighted than the one we observed in Treatment 1, as we mentioned before, we cannot exclude the null hypothesis that they are actually the same (Wilcoxon rank-sum: $P=0.1969$). The mean UG offer was $16.77$c, that is $42\%$ of the budget. Slightly more than half of the subjects ($53\%$) chose the equal split and ten subjects acted consistently with the subgame perfect equilibrium for payoff-maximising preferences, which is to offer $0$c or $2$c.

Similarly to Treatment 1, we find a significant positive correlation between the UG offers and the DG donations (Spearman's $\rho$ rank correlation: test: $\rho = 0.3322, P<.0001$). In this case, $38\%$ of the subjects played exactly according to our global prediction stated in Equation~(\ref{eq:prediction}). Again, we tested the quality of a linear fitting (see Equation~(\ref{eq:pol-fitting})). We obtained $c_{\text{fitting}}(0)=14.2689$ and $c_{\text{fitting}}(1)=0.5079$ and the mean squared error  MSE$_{\text{fitting}}$ = $35.5953$, while the cooperative equilibrium prediction gives  $c_{\text{model}}(0)=0.10, c_{\text{model}}(1)=1$ and the mean squared error  MSE$_{\text{model}}$ = $40.7273$.

Also in this case, the difference is mainly caused by the inconsistent behaviour of 42 subjects ($21\%$) who donated $0$c in the DG and proposed the equal split in the UG. When only consistent participants were considered, the mean UG offer was equal to $15.91$c and the mean DG donation was $6.27$c. The correlation between the UG offers and the DG donation also gets much stronger (Spearman's $\rho$ rank correlation: $\rho=0.6377, P<.0001$). This correlation is also apparent in the scatter plot and linear fitting in Figure~\ref{fig:tr2}. Moreover, $49\%$ of these subjects played exactly according to Equation~(\ref{eq:prediction}), and the polynomial fitting gave  $c_{\text{fitting}}(0)=10.9531$, $c_{\text{fitting}}(1)=0.7907$ and MSE$_{\text{fitting}}=23.1418$, compared with the cooperative equilibrium prediction with $c_{\text{model}}(0)=10$, $c_{\text{model}}(1)=1$ and MSE$_{\text{model}}=24.7695$.

\begin{table}[ht]
\centering
\begin{tabular}{|c | c | c | c | c | c | c | c | c | c }
\hline\hline
%\multirow{3}{*}{} & \multicolumn{3}{c}{$\alpha=0.5$}  & \multicolumn{3}{c}{$\alpha=0.75$}  \\ [0.5ex]
& Mean donation & Mean offer & Correlation \\ [0.5ex]
\hline
All & 5.39 &16.77&0.3322$^{***}$ \\
No (0,20) & 6.27&15.91 & 0.6377$^{***}$\\[1ex]
\hline
\end{tabular}
\vspace{0.3cm}
\caption{Statistical analysis of Treatment 2. Spearman's $\rho$ rank correlation test. $^{***}$ stands for $P<.0001$.}\label{table:stat study 2}
\end{table}

\begin{table}[ht]
\centering
\begin{tabular}{|c | c | c | c | c | c | c | c | c | c }
\hline\hline
%\multirow{3}{*}{} & \multicolumn{3}{c}{$\alpha=0.5$}  & \multicolumn{3}{c}{$\alpha=0.75$}  \\ [0.5ex]
& Linear fitting & Model & MSE fitting & MSE model \\ [0.5ex]
\hline
All &$p(0)=14.27, p(1)=0.51$&$p(0)=10,p(1)=1$&35.60 &40.73\\
No (0,20) & $p(0)=10.95, p(1)=0.79$&$p(0)=10,p(1)=1$ & 23.14&24.77\\[1ex]
\hline
\end{tabular}
\vspace{0.3cm}
\caption{Comparison between polynomial fitting model at degree $n=1$ and predictions of the cooperative equilibrium in Treatment 2.}\label{table:prediction study 2}
\end{table}

%%%%%%%%%%%%%%%%%%%%%%%%%%%%%%%%%%%%%%%%%%%%%%%%%%
%%%%%%%%%%%%%%%%%%%%%%%%%%%%%%%%%%%%%%%%%%%%%%%%%%

\section{Discussion}\label{sec:discussion}
\noindent
In this paper, we have extended the concept of cooperative equilibrium from two-player social dilemmas to a more general class of games containing, as important instances, the Ultimatum Game  and the Nash Bargaining Problem. We used this concept to obtain theoretical predictions of human players' behaviour in these games. In particular, we derived a correlation between the offers that proposers make to recipients in the UG and the donations they would make in the corresponding variant of the Dictator Game where the proposer is endowed with $75\%$ of the total budget, while the recipient has the remaining $25\%$ upfront. This correlation is particularly interesting because it can be tested in the laboratory.

Our model explicitly predicts the regularities III and IV outlined in Section~\ref{sec:introduction}. Specifically, it suggests that on average, UG proposers would offer at least 25\% of their budget to recipients, which is consistent with the experimental findings showing that there are almost no offers below $0.2b$ (regularity III). Also notice, that accepting less than 25\% of the proposer's budget is excluded from the set of feasible strategy profiles in the game induced by the cooperative coalition structure, which is consistent with regularity IV, stating that offers below $0.25b$ get rejected with high probability. Finally, the correlation we established between the UG offers and the DG donations gives an \emph{a posteriori} explanation of regularities I and II: indeed, virtually every donation in the DG is bounded above by the equal split~\cite{engel2011dictator}.

In order to test the predicted correlation between the offers in the UG and the donations in the DG, we conducted an experiment via Amazon Mechanical Turk and found that about $80\%$ of experimental subjects played according to our prediction. As for the remaining 20\%, we must point out that these subjects seem to form a special class showing inconsistent behaviour in the two games: these subjects made a zero donation in the Dictator Game, but offered half of the budget to the recipient in the Ultimatum Game. Such behaviour appears problematic not only for our approach, but also for the standard inequity aversion model introduced by Fehr and Schmidt~\cite{fehr1999theory}. This model assumes that player $i$'s utility from the allocation $(x_i,x_{-i})$ of monetary payoffs between himself and player $-i$, is given by:
\begin{align}
u_i(x_i,x_{-i})=x_i-\alpha_i\max\{x_{-i}-x_i, 0\}-\beta_i\max\{x_i-x_{-i}, 0\}
\end{align}
where $0\leq\beta_i\leq\alpha_i$ are individual parameters, representing the extent to which player $i$ regrets being worse off (parameter $\alpha_i$) or enjoys being better off (parameter $\beta_i$) than the other player. By estimating the joint distribution $(\alpha,\beta)$ of the parameters, the model gives 30\% of players with $(\alpha,\beta)=(0,0)$, 30\% of players with $(\alpha,\beta)=(0.5,0.25)$, 30\% with $(\alpha,\beta)=(1,0.6)$, and 10\% with $(\alpha,\beta)=(4,0.6)$. However, this distribution is not consistent with the aforementioned subjects choosing equal split of the budget in the UG, but a zero donation in the DG. To see this, first observe that if a subject makes a zero donation in the Dictator Game, then his parameter $\beta$ must be smaller than $0.5$. Indeed, a subject with $\beta=0.5$ is indifferent to any donation and a subject with $\beta > 0.5$ would donate half of the endowment. Also note that a person donating zero in the DG but splitting the budget equally in the UG cannot be of the type with $(\alpha,\beta)=(0.5,0.25)$, since then he would donate a half of his endowment also in the DG. It follows that, according to the Fehr and Schmidt's estimation of the parameters, the only possibility is that such a person has the type with $(\alpha,\beta)=(0,0)$. We further make the following few observations (see the Appendix for the full proof):
\begin{enumerate}
\item Subjects of type $(\alpha,\beta)=(0,0)$ who believe they are playing with subjects of type $(\alpha,\beta)=(0,0)$, would donate nothing in the DG and offer nothing in the UG.
\item Subjects of type $(\alpha,\beta)=(0,0)$ who believe they are playing with subjects of type $(\alpha,\beta)=(0.5,0.25)$, would donate nothing in the DG and offer 10c in the UG.
\item Subjects of type $(\alpha,\beta)=(0,0)$ who believe they are playing with subjects of type $(\alpha,\beta)=(2,0.6)$, would donate nothing in the DG and offer 16c in the UG.
\item Subjects of type $(\alpha,\beta)=(0,0)$ who believe they are playing with subjects of type $(\alpha,\beta)=(4,0.6)$, would donate nothing in the DG and 17.78c in the UG.
\end{enumerate}
Now, it is clear from the above that Fehr and Schmidt's model does not account for the players offering equal split in the UG and making a zero donation in the DG. The question is though who are these subjects and why do they act in such an apparently inconsistent way? We note that, although their behaviour deviates from our global prediction that is based on an explicit computation of the model parameters, yet it is consistent with Theorem~\ref{thm:UG}, corresponding to the case where $ \tau^R_{\emptyset}(p_c)=1$ and both the proposer and the recipient have the same perception of money. Recalling that  $\tau^R_{\emptyset}(p_c)$ represents the probability that the recipient assigns to the event that the proposer does not deviate from the cooperative coalition structure, it follows that the cooperative equilibrium concept explains such subjects as more cooperative than the average. Testing this (as we see it, non-trivial) hypothesis is an interesting direction for future research.

There is also another interesting novel prediction provided by our cooperative equilibrium model. Consider the situation where it is common knowledge that the proposer is much richer than the recipient. This could be formalised by saying that the aggravation function of the proposer, for fixed $y$, increases much slower than that of the recipient. Consequently, the intersection point $x^*$ found in Lemma~\ref{lem:unique solution} lies much closer to 1. Keeping $\tau_{R,P}(p_c)$ constant, this implies that the prediction of the cooperative equilibrium is that, in this situation, the proposer's offer would be higher than the one he would make if they were equally rich. Consequently, the cooperative equilibrium predicts that proposers offer more to poor players than to equally rich players. Such a behaviour has been qualitatively observed in the Dictator Game~\cite{branas2007promoting}, but we are not aware of any study investigating it in the Ultimatum Game. In future research, it would be interesting to test this prediction experimentally.

%%%%%%%%%%%%%%%%%%%%%%%%%%%%%%%%%%%%%%%%%%%%%%%%%%
%%%%%%%%%%%%%%%%%%%%%%%%%%%%%%%%%%%%%%%%%%%%%%%%%%

%\section*{References}
%\pagebreak

\bibliographystyle{plain} 
\bibliography{valerio}

%%%%%%%%%%%%%%%%%%%%%%%%%%%%%%%%%%%%%%%%%%%%%%%%%%
%%%%%%%%%%%%%%%%%%%%%%%%%%%%%%%%%%%%%%%%%%%%%%%%%%

%\pagebreak

\section*{Appendix}

\subsection*{Technical details}
\noindent
According to Fehr and Schmidt's model, the utility for player $i$ when player $i$ receives $x_i$ and player $-i$ receives $x_{-i}$ is:
\begin{align}
u_i(x_i,x_{-i})=x_i-\alpha_i\max(x_{-i}-x_i,0)-\beta_i\max(x_i-x_{-i},0),
\end{align}
where $0\leq\beta_i\leq\alpha_i$ are individual parameters, representing the extent to which player $i$ regrets to be worse off (parameter $\alpha_i$) and better off (parameter $\beta_i$) than the other player. We say that a player has type $(\alpha,\beta)$ if he plays according to Fehr and Schmidt's utility function with parameters $\alpha$ and $\beta$.

\begin{proposition}
Consider the Ultimatum Game where the proposer has a budget of 40c and the Dictator Game where the dictator has a budget of 30c and the recipient already owns 10c. The following statements hold:
\begin{enumerate}
\item If the proposer has type $(\alpha,\beta)=(0,0)$ and believes he is playing with a recipient of type $(\alpha,\beta)=(0,0)$, than he would donate nothing in the DG and offer nothing in the UG.
\item  If the proposer has type $(\alpha,\beta)=(0,0)$ and believes he is playing with a recipient of type $(\alpha,\beta)=(0.5,0.25)$, than he would donate nothing in the DG and offer 10c in the UG.
\item  If the proposer has type $(\alpha,\beta)=(0,0)$ and believes he is playing with a recipient of type $(\alpha,\beta)=(2,0.6)$, than he would donate nothing in the DG and offer 16c in the UG.
\item  If the proposer has type $(\alpha,\beta)=(0,0)$ and believes he is playing with a recipient of type $(\alpha,\beta)=(4,0.6)$, than he would donate nothing in the DG and offer 17.78c in the UG.
\end{enumerate}
\end{proposition}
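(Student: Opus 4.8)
The plan is to treat the two games separately, starting with the easy Dictator Game and then handling the Ultimatum Game, which carries all the content. Throughout, the proposer/dictator has type $(\alpha,\beta)=(0,0)$, so his Fehr--Schmidt utility reduces to $u_P(x_P,x_R)=x_P$; he is a pure payoff maximiser and his own utility never depends on the recipient's type. This immediately settles the DG in all four cases: donating $d\in[0,30]$ leaves the dictator with $30-d$, which is strictly decreasing in $d$, so the unique optimum is $d=0$. Hence in every item the dictator donates nothing, and it remains only to compute the four UG offers.

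For the Ultimatum Game I would first describe the recipient's acceptance rule, since the proposer, being a payoff maximiser, will offer the smallest amount the recipient is willing to accept. If the recipient of type $(\alpha_R,\beta_R)$ rejects an offer $p$, both players get $0$ and his utility is $u_R(0,0)=0$; if he accepts, he gets $p$ while the proposer keeps $40-p$, so his utility is
\begin{align*}
u_R(p,40-p)=p-\alpha_R\max(40-2p,0)-\beta_R\max(2p-40,0).
\end{align*}
The key observation is that for any offer below the equal split, i.e.\ $p<20$, only the disadvantageous-inequity term is active, so $u_R=p(1+2\alpha_R)-40\alpha_R$, which is strictly increasing in $p$. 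The recipient therefore accepts precisely when $u_R\ge 0$, which gives the minimal acceptable offer
\begin{align*}
p^*=\frac{40\alpha_R}{1+2\alpha_R}.
\end{align*}
Here I adopt the standard tie-breaking convention that the recipient accepts when indifferent.

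I would then verify that $p^*<20$ for each of the four types (indeed $p^*\to 20$ only as $\alpha_R\to\infty$), which guarantees that the displayed expression is valid and, in particular, that $\beta_R$ never enters the computation. Since the proposer's payoff $40-p$ is strictly decreasing in $p$, while any $p\ge p^*$ is accepted and rejection yields $0<40-p^*$, his unique best reply is to offer exactly $p^*$. Substituting $\alpha_R=0,\,0.5,\,2,\,4$ into $p^*=40\alpha_R/(1+2\alpha_R)$ then yields the offers $0$, $10$, $16$, and $160/9\approx 17.78$ cents, respectively, completing the four items.

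The computations are entirely elementary; the only point requiring care is the case analysis for the recipient's $\max$ terms, namely confirming that each threshold $p^*$ falls in the region $p<20$ so that the advantageous-inequity parameter $\beta_R$ is irrelevant. This is also the conceptual crux of the statement: it is precisely because all these thresholds sit below the equal split that the four prescribed offers are insensitive to $\beta_R$, which is what makes the proposition hold uniformly across the differing $\beta$ values in the listed types.
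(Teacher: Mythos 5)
Your proof is correct and follows essentially the same route as the paper's: the type-$(0,0)$ proposer maximises his own monetary payoff, so he donates $0$ in the DG and offers the recipient's minimal acceptable amount in the UG, which is found by solving $u_R(p,40-p)=0$ in the disadvantageous-inequity regime (the paper does this case by case, e.g.\ $2x-20<0$ for $\alpha_R=0.5$, while you derive the closed form $p^*=40\alpha_R/(1+2\alpha_R)$ once). Your explicit checks that $p^*<20$ (so $\beta_R$ is irrelevant) and that the recipient accepts at indifference are tacit in the paper's argument, so this is a slightly more careful write-up of the same proof.
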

\begin{proof}
The first statement is trivial. As for the second one, the part concerning the DG is trivial. As for the part concerning the UG, we first observe that since the proposer has type $(0,0)$ then he makes the minimum offer such that the recipient accepts. Now, observe that rejecting an offer gives utility $0$ to the recipient, while accepting an offer gives utility negative only if that offer $x$ is smaller than 10c, in which case the recipient gets a utility of $2x-20 < 0$. Thus, the proposer offers 10c and this completes the proof of the second statement. The proof of the third and fourth statements are very much alike that of the second statement.
\end{proof}

%%%%%%%%%%%%%%%%%%%%%%%%%%%%%%%%%%%%%%%%%%%%%%%%%%

\subsection*{Experimental instructions}
\noindent
Here we report full experimental instructions of Treatment 1. Instructions of Treatment 2 were exactly the same, apart from the order.
\wl
\textbf{\underline{Screen 1.}}

This is the first part of the HIT. You have been paired with another, anonymous MTurk worker. Your bonus will depend on your and his choices.

We give you a budget of 40 cents. Decide how much, if any, to offer to the other participant. At the same time, he will decide his minimal acceptable offer. 

If your offer is larger than or equal to his minimal acceptable offer, then he will get your offer and you will keep the rest. Otherwise, if your offer is smaller than his minimal acceptable offer, both of you get nothing.

The other person is real and will really make a decision.
\wl
\textbf{\underline{Screen 2.}}

Now we will ask you some comprehension questions to make sure you understand the game. You MUST answer all questions CORRECTLY to receive the payment. In case you fail one of these questions, you will immediately see the message "We thank you for your time spent taking this survey. Your response has been recorded", and you will not receive any payment.
\wl
\textbf{\underline{Screen 3.}}

Assume YOU offer 12 cents and the minimal acceptable offer of the OTHER PARTICIPANT is 10 cents. What is your bonus? (Here participants could choose between 0c, 10c, 12c, and 28c).

\wl
\textbf{\underline{Screen 4.}}

Assume YOU offer 12 cents and the minimal acceptable offer of the OTHER PARTICIPANT is 10 cents. What is his bonus? (Here participants could choose between 0c, 10c, 12c, and 28c).

\wl
\textbf{\underline{Screen 5.}}

Assume YOU offer 8 cents and the minimal acceptable offer of the OTHER PARTICIPANT is 14 cents. What is your bonus? (Here participants could choose between 0c, 8c, 14c, and 32c).

\wl
\textbf{\underline{Screen 6.}}

Assume YOU offer 8 cents and the minimal acceptable offer of the OTHER PARTICIPANT is 14 cents. What is his bonus? (Here participants could choose between 0c, 8c, 14c, and 32c).

\wl
\textbf{\underline{Screen 7.}}

Congratulations! You have passed the comprehension questions. Now, it's time to make your decision.

WHAT IS YOUR OFFER? (Here participants could select every \emph{even} amount of money between 0c and 40).

\wl
\textbf{\underline{Screen 8.}}

This is the second part of the HIT. Now, you have been paired with ANOTHER anonymous MTurker. This time your bonus depends ONLY on your own choice: the other participant CANNOT influence your bonus.

You are given a new budget of 30 cents and the other participant is given 10 cents. You have to decide how much, if at all, to DONATE to the other participant. The other participant has no say and will accept any donation.

Note that the decision is unilateral: nobody will have the choice to make a donation to you.

The other person is real and will really get your donation.

\wl
\textbf{\underline{Screen 9.}}

Now we will ask you some comprehension questions to make sure you understand the game. You MUST answer all questions CORRECTLY to receive the payment. In case you fail one of these questions, you will immediately see the message "We thank you for your time spent taking this survey. Your response has been recorded", and you will not receive any payment.

\wl
\textbf{\underline{Screen 10.}}

If you donate 10 cents, what is YOUR bonus? (Here participants could choose between 30c, 20c, 10c, and 0c).

\wl
\textbf{\underline{Screen 11.}}

If you donate 10 cents, what is the OTHER PARCIPANT's  bonus?  (Here participants could choose between 30c, 20c, 10c, and 0c).

\wl
\textbf{\underline{Screen 12.}}

If you donate 0 cents, what is YOUR bonus?  (Here participants could choose between 30c, 20c, 10c, and 0c).

\wl
\textbf{\underline{Screen 13.}}

If you donate 0 cents, what is THE OTHER PARTICIPANT's bonus?  (Here participants could choose between 30c, 20c, 10c, and 0c).

\wl
\textbf{\underline{Screen 14.}}

If you donate 12 cents, what is YOUR bonus? (Here participants chould choose between 22c, 18,c, 8c, and 12c).

 \wl
\textbf{\underline{Screen 15.}}

If you donate 12 cents, what is the OTHER PARTICIPANT's bonus? (Here participants chould choose between 22c, 18,c, 8c, and 12c).

 \wl
\textbf{\underline{Screen 16.}}

Congratulations! You have passed the comprehension questions. Now, it's time to make your decision.

WHAT IS YOUR DONATION?

\end{document}